\newtheorem{remark}{Remark}
\newtheorem{proposition}{Proposition}
\newtheorem{theorem}{Theorem}
\makeatletter \@addtoreset{theorem}{section} \makeatother
\makeatletter \@addtoreset{proposition}{section} \makeatother
\makeatletter \@addtoreset{remark}{section} \makeatother
\title{Conditions for Oscillator Small-Signal Amplitude-Phase Orthogonality.}
\author{Torsten Djurhuus \thanks{The authors are
with the Institute of Physics, Goethe University of Frankfurt am
Main, Max-von- Laue-Strasse 1, 60438, Frankfurt am Main.
(correspondence e-mail:
t.djurhuus@physik.uni-frankfurt.de).} \\
    Goethe-University Frankfurt\\
    \texttt{t.djurhuus@physik.uni-frankfurt.de} \\
    \And
    Viktor Krozer \\
    Goethe-University Frankfurt\\
    \texttt{krozer@physik.uni-frankfurt.de}
}
\begin{document}

\maketitle

\begin{abstract} The paper explores a previously unknown connection relating
the symmetry properties of an oscillator steady-state to the
orthogonal representation of amplitude and phase variables in the
small-signal regime. It is shown that only circuits producing
perfectly symmetric steady-states can produce an orthogonal
Floquet decomposition. Considering room temperature operation this
scenario implies zero AM-PM noise conversion. This surprising and
novel result follows directly from the predictions of a rigorous
model framework first described herein. The work presented in this
text extend the current state-of-the-art w.r.t. oscillator
small-signal/noise characterization.
\end{abstract}

\keywords{oscillators, phase noise, AM-PM noise conversion,
circuit analysis, nonlinear dynamical systems, system analysis and
design}

\section{Introduction}

\label{sec0}

The oscillator small-signal/linear-response (LR) governs the
circuit dynamics in reply to small perturbations, \emph{e.g.}
noise, around the periodic steady-state (PSS). Rigorous methods
for characterizing oscillator dynamics in a noisy environment are
absolutely critical for developing analysis/synthesis tools used
to optimize performance of various critical circuits found in
modern communication systems.
\par
In the general case, small-signal oscillator amplitude and phase
variables are defined in-terms of mutually oblique (\emph{i.e.}
non-orthogonal) Floquet vectors decomposing the LR map
\cite{kartner1990,demir2000,coram2001,djurhuus2009}. One important
implication of this oblique representation is the integration of
amplitude-noise into oscillator phase response; a process known as
oscillator AM-PM noise conversion\footnote{note that this implies
a representation where the noise perturbing the oscillator is
decomposed in-terms of an orthogonal frame (see discussion in
\cref{sec1b} for details).
\label{sec0:foot2}}\cite{razavi1996,samori1998,laloue2003,chang1997,bonnin2012,
bonnin2014,traversa2011,djurhuus2005,djurhuus2005_2,djurhuus2006,djurhuus2021}.
The aim of the model described herein is to seek an answer to the
open question :\emph{what type of oscillators support a fully
orthogonal small-signal representation implying zero AM-PM ?} This
scenario is represented mathematically in-terms of an
\emph{orthogonal Floquet decomposition} (OFD) of the LR. The topic
was previously briefly studied by the authors in
\cite{djurhuus2021}, however, only for the special case of planar
oscillator and strictly from a simulation-based perspective.
\par
The paper documents and validates the novel SYM-OFD model
framework with specific focus paid to the remarkable statement :
${\footnotesize \text{ORTHOGONALITY (OFD)} \Rightarrow}$
${\footnotesize\text{STEADY-STATE SYMMETRY}}$. This statement is
both notable and unanticipated. To our knowledge, this constitutes
the first ever description of a direct analytical link relating a
specific decomposition of the oscillator LR and the properties of
the underlying PSS. The statement also provides a formal
sufficient condition for zero AM-PM noise conversion in higher
dimensional oscillators. This relation cannot be reached using
arguments based on empirical or phenomenological reasoning but
relies on the rigorous methodology developed herein. It is
important to note that the above statement references a strict
one-way relation \emph{i.e.} an orthogonal LR representation
\emph{implies} symmetry. The reverse implication is however
\underline{not true} as symmetry does not imply
orthogonality\footnote{this is easily seen by considering simple
2D counter-examples of the form (polar coordinates) $\dot{r} =
F(r), \dot{\phi} = 1 + G(r)$ with $F(a) = 0, dF(a)/dr < 0$ and
$dG(a)/dr \neq 0$. This class of systems generate symmetric
limit-cycle at $r=a$ but \underline{does not} produce an OFD (see
discussion in \cref{sec1}). \label{sec0:foot1}}.
\par
The validity of the orthogonal model representation has been
debated in the literature for several decades (see \emph{e.g.}
\cite{kartner1990,kaertner89,coram2001,djurhuus2009,bonnin2012,bonnin2014}).
Unlike the natural Floquet description, an orthogonal model
representation is, in the general case, artificial(un-natural)
meaning that is coordinate dependent which implies that
corresponding model operators will not represent tensors. This
issue has important implications for oscillator LR modelling. A
coordinate-independent/tensor approach, by definition, always
leads to simplest, cleanest and most generalized model description
\cite{djurhuus2009}. Our work herein provides a definitive
resolution of the aforementioned open debate : in-order for a
orthogonal modelling approach to be valid (\emph{i.e.} be natural,
coordinate-independent) the underlying PSS must be symmetric
(orthogonality implies symmetry).
\par
In answering these types of open question, using a rigorous,
proof-based, methodology, the novel SYM-OFD framework advances the
current state-of-the-art \emph{w.r.t.} time-domain oscillator
small-signal/noise characterization and modelling. It introduces
several novel ideas and insights such as \emph{e.g.} why OFD's are
not observed in real-life oscillator descriptions where non-linear
device-models make it impossible to attain perfect PSS symmetry.
Finally, the methodology enables a whole new category of numerical
optimization tools which could potentially find use in future
applications.
\par
In-order to briefly explore the last point, consider \emph{e.g.} a
scenario where some oscillator \emph{figure-of-merit} (FoM),
describing a particular performance metric of interest, attains an
optimum at an OFD state\footnote{an example here could \emph{e.g.}
be the simple quadrature VCO (QVCO) oscillator circuit which was
shown attain optimum performance when tuned to such an
configuration\cite{djurhuus2005,djurhuus2005_2,djurhuus2006}.}. We
introduce two strictly positive scalars, $\Lambda,\Upsilon$, which
measure the deviation from PSS symmetry, and OFD solution state,
respectively, with $\Upsilon=0,\Lambda=0$ corresponding to perfect
symmetry/OFD (see \cref{sec5a}). In the vicinity of zeros for
these two measures, the theory then predicts that $\Upsilon$
(orthogonality) will be minimized alongside $\Lambda$ (symmetry)
\emph{i.e.} $\Upsilon \to 0 \Rightarrow \Lambda \to 0$. Using
standard minimization routines\cite{galassi2002}, $\min(\Lambda)$
is then derived, starting from a given initial condition in
parameter-space. If parameter sets, achieving a set $\Upsilon$
goal, indeed exist then the SYM-OFD methodology guarantees that
at-least one of these can be found in the vicinity of equivalent
symmetry (zero) points. Due to the one-way nature of the
orthogonality/symmetry relation there will be the possibility of
false-positives (symmetry point may imply a non-OFD state).
However, even with this serious drawback, the scheme proposed here
is easily several orders-of-magnitudes faster than any competing
approach\footnote{the described algorithm is, to our knowledge,
the only one of its kind so the only alternative is brute-force
sampling of parameter points; which scales exponentially with the
dimension of the parameter-space. In comparison, the algorithm
described above scales polynomially (due to minimization
algorithm).\label{sec0:foot3}}. The idea discussed here is only
possible due to the connections forged by the SYM-OFD framework.
It is simply not possible to directly optimize/minimize the OFD
measure, $\Upsilon$, given that the LR equations are unknown
a-priori to calculating the PSS.

\section{Detailed paper summary \& main result.}

\label{sec0x}

\Cref{sec1} starts with a quick introduction to some basic
underlying topics (tangent-bundle, Floquet theory, OFD oscillators
\emph{etc.}). The analysis in the next two sections, leading up to
the main result in \cref{sec3:theo1}, can then be divided into $4$
consecutive steps (refer to acronym/symbol lists in back of paper)
:

\begin{enumerate}
\item \textbf{\Cref{sec2,sec2a}} : A normal-form oscillator
(NF-OSC), $\mathbf{o} = (\psi_{\tau},\xi)$, on the domain
$\mathbb{U}$, is used to \underline{parameterize} the oscillator
under investigation, $\mathbf{q} = (\phi_{\tau},\gamma)$, on
domain $\mathbb{W}_s$ (see \cref{sec2a:fig1,sec2b:fig2}). This
parametrization is facilitated in-terms of a unique
\emph{conjugation-map}, $h : \mathbb{U} \to \mathbb{W}_s$, where
$h$ \emph{conjugates} $\mathbf{o}$ and $\mathbf{q}$; written
$\mathbf{o} \sim_h \mathbf{q}$. Conjugation preserves invariant
sets and specifically $h(\xi) = \gamma$. \item
\textbf{\Cref{sec2b}} : The NF-OSC, $\mathbf{o}$, is chosen
according to the two criteria : it must have a canonical (simple)
model description and it must be a OFD oscillator ($\mathbf{o} \in
\mathbf{O}^{\perp}$). The chosen model is given the handle
PNF-OSC. It is shown that PNF-OSC limit-cycle is perfectly
symmetric $\xi \in \mathcal{SYM}_n$. \item \textbf{\Cref{sec2c}} :
Let $\mathbf{o} \sim_h \mathbf{q}$, where, $\mathbf{o} \in
\mathbf{O}^{\perp}$, is the PNF-OSC developed in \textbf{step \#2}
above. The demand, $\mathbf{q} \in \mathbf{O}^{\perp}$, restricts
the conjugation map as $h \in \mathbf{H}_C \subset \mathbf{H}$
(see \cref{sec2c:prop1} \& symbol list). In summary : let
$\mathbf{q}$ be conjugate to the PNF-OSC, then, $\mathbf{q}\in
\mathbf{O}^{\perp}$, will hold if, and only if, $h \in
\mathbf{H}_C$. A new conjugation operator (equivalence relation),
$\overset{c}{\sim}_h$, is introduced and the above statement is
written (see \cref{sec2c:prop2}) as $\mathbf{q} \in
\mathbf{O}^{\perp} \Leftrightarrow \mathbf{o} \overset{c}{\sim}_h
\mathbf{q}$. \item \textbf{\Cref{sec3:prop2}} : Using Liouville's
and Schottkys theorems
\cite{astala2008,hartman1947,flanders1966,jacobowitz1991,kuhnel2007,iwaniec1998}
it is shown that the conjugation map, $h \in \mathbf{H}_C$,
\underline{must} have the form $h(y) = \rho A y$ where $\rho \in
\mathbb{R} \setminus \{0\}$ is a non-zero real value and $A \in
O(n)$ is an orthogonal $n\times n$ matrix. \item
\textbf{\Cref{sec3:theo1}} : The result is reached by following
the chain of steps \#1-\#4 described above : \textbf{\#1 ($h(\xi)
= \gamma$) $\to$ \#2 ($\xi \in \mathcal{SYM}_n$) $\to$ \#3 ($h \in
\mathbf{H}_C$) $\to$ \#4 ($h(y) = \rho A y$)} which allows the
calculation $\gamma = h(\xi) = A\xi \in \mathcal{SYM}_n$ where
$A\xi \in \mathcal{SYM}_n$ holds because multiplication by the
orthogonal map $A$ preserves the symmetry of $\xi$ (\emph{i.e.}
maps $\mathcal{SYM}_n \to \mathcal{SYM}_n$). At this point we have
reached the main result (\cref{sec3:theo1}) : $\mathbf{q} \in
\mathbf{O}^{\perp} \Rightarrow \gamma \in \mathcal{SYM}_n$ which,
in words, says that only oscillators with a perfectly symmetric
limit-cycle support an OFD $({\footnotesize \text{ORTHOGONALITY}
\Rightarrow \text{STEADY-STATE SYMMETRY}})$. Please note that the
above statement is a \underline{strict one-way} relation\footnote{
the arrow in the above equation only goes one-way (orthogonality
implies symmetry) as also stated in the introduction. This follows
as we have only shown that $h \in \mathbf{H}_C$ maps a symmetric
limit-cycle, $\xi \in \mathcal{SYM}_n$, of the OFD PNF-OSC,
$\mathbf{o} \in \mathbf{O}^{\perp}$, on $\mathbb{U}$, onto an OFD
oscillator $\mathbf{q} \in \mathbf{O}^{\perp}$, on $\mathbb{W}_s$,
with an equally symmetric limit-cycle $\gamma \in
\mathcal{SYM}_n$. Nowhere, in the discussion herein, is it ever
claimed that an oscillator cannot have a symmetric limit-cycle and
\underline{not} be a member of $\mathbf{O}^{\perp}$. In-fact it is
easily proven, by way of simple counter-examples (see
\cref{sec0:foot1}), that such a claim would be
false.\label{sec0x:foot1}}.
\end{enumerate}

Finally, it is proven in \cref{sec3a} that the analysis results,
carried out using the specific PNF-OSC model, are in-fact unique.
As discussed in \cref{sec3a} this follows straight from the fact
that, $\overset{c}{\sim}$, represents an equivalence relation.
\Cref{sec5} details a series of numerical experiment, spanning
several different oscillator circuits, all of which,
unequivocally, support the claim introduced in \cref{sec3:theo1}.

\subsection{Summary of methodology}

Consider \cref{sec2b:fig2} which shows the PNF-OSC parameterizing
an unspecified oscillator circuit; \emph{i.e.} the N-OSC. The
dynamics of the PNF-OSC circuit is illustrated in
\cref{sec2b:fig1} which shows transient orbits plus the
limit-cycle of this circuit. This oscillator circuit is designed
to be an OFD oscillator and we find that the limit-cycle is
symmetric. The N-OSC circuit in \cref{sec2b:fig2}, could be any
hyperbolic $n$-dimensional oscillator \emph{e.g.} one of the
circuits described in \cref{sec5}. The parametrization map, $h$,
transforming the PNF-OSC orbits into N-OSC orbits (and vice-versa
through $h^{-1}$) is guaranteed to exist and to be unique (see
\cref{sec2}). This transformation preserves invariant spaces and
hence maps the PNF-OSC limit-cycle into the equivalent N-OSC set
(see \cref{sec2b:fig2} caption).
\par
A conjugation-map, $h$, relating two OFD oscillators, must belong
to the special subset of maps $\mathbf{H}_C$
(\cref{sec2c:prop1,sec2c:prop2}). Hence, only a special type of
map can transform between two OFD oscillators. The PNF-OSC is, by
design, an OFD oscillator and it follows that the N-OSC
(righthand-side of \cref{sec2b:fig2}) will be an OFD as well if,
and only if, $h$ belongs to the set $\mathbf{H}_C$. So now it
becomes clear why the PNF-OSC was chosen as an OFD oscillator. We
are trying to describe the subset of all N-OSC circuits which are
also OFD oscillators. This then means that both the PNF-OSC and
N-OSC will be OFD circuits which then automatically restricts the
possible conjugation/parametrization-map candidates from any
smooth map, $h\in \mathbf{H}$, to the much smaller subset $h\in
\mathbf{H}_C \subset \mathbf{H}$. This analysis trick proves
fruitful as we are able to partially characterize this smaller
subset of maps (see \cref{sec3}). Specifically, the theory shows
that a map in $\mathbf{H}_C$ transform the limit-cycles in
\cref{sec2b:fig2} in-terms of simple linear orthogonal map (scalar
scaling + rotation + inversion). This map preserves symmetry and
the N-OSC limit-cycle will hence also be symmetric. This last
statement is the main result of this paper, \emph{i.e.}
\cref{sec3:theo1}, which represents a strict one-way relation (see
\cref{sec0x:foot1}.)
\par
The question of uniqueness of this result, derived using the
specific PNF-OSC model, becomes important to discuss. Why can we
not just chose some other OFD model, call it XNF-OSC, perhaps even
with a non-symmetric limit-cycle and then produce a completely
different conclusion? As explained in \cref{sec3a}, the uniqueness
of \cref{sec3:theo1} is saved by fact that conjugation
(parametrization) represents an equivalence relation. Basically,
if the XNF-OSC parameterizes the N-OSC then it must also
parameterize the PNF-OSC and vice-versa (via inverse map) which
then implies that the XNF-OSC (an OFD oscillator) must have a
symmetric limit-cycle (see above discussion). Hence, whether we
parameterize the N-OSC in \cref{sec2b:fig2} using the PNF-OSC, the
XNF-OSC, or any other possible OFD template, is irrelevant. We
will always arrive at the same result in \cref{sec3:theo1}.

\section{Basic theory}

\label{sec1}

The oscillator state is governed by a $n$-dimensional vector-field
$f : \mathbb{R}^n \to \mathbb{R}^n$ generating a set of $n$
coupled ordinary differential equations (ODE) $\dot{x} = f(x)$
with $x(t) : \mathbb{R} \to \mathbb{R}^n$ being the
$n$-dimensional state-vector parameterized by time $t$. The
solution, corresponding to the initial condition $x(0)= x_0$ is
written $x(t) = \phi_t(x_0)$ with $\phi_t :
\mathbb{R}\times\mathbb{R}^n \to \mathbb{R}^n$ known as the
\emph{flow}. The oscillator DC-point (quiescent start-up point),
$x_q$, is a zero-point of the vector field $f(x_q)=0$ and hence a
\emph{fixed-point} of the flow $\phi_t(x_q) = x_{q}$ for all $t$.
The oscillator ODE generates a hyperbolic $1$-dimensional
attractor, $\gamma$, known as a \emph{limit cycle}. The oscillator
PSS, $x_s(t+T) = x_s(T)$ is then a $T$-periodic orbit,
corresponding to an initial condition in $\gamma$, \emph{i.e.}
$x_s(t) = \psi_t(x_0)$ with $x_0 \in \gamma$. Herein, the ODE
description is assumed time-normalized using the time-scale $\tau
= 2\pi ( t / T )$ which results in a $2\pi$ periodic limit-cycle
solution $x_s(\tau + 2\pi ) = x_s(\tau)$ corresponding to an
oscillator frequency $\omega_0 = 2\pi/T = 1$. The term, oscillator
$\mathbf{q}$, refers herein to the solution pair $\mathbf{q}
=(\phi_{\tau},\gamma)$.

\subsection{The stable manifold \& isochrone foliation}

\label{sec1a}

\begin{figure}[!h]
\begin{center}
\includegraphics[scale=1.0]{./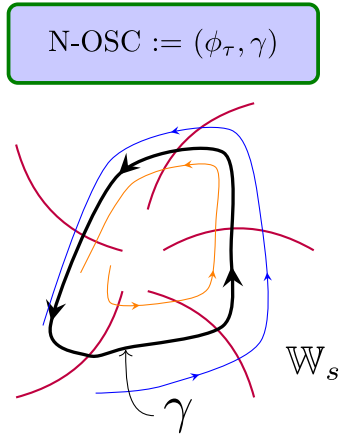}
\end{center}
\caption{A hyperbolic $n$-dimensional oscillator $\mathbf{q} =
(\phi_{\tau},\gamma)$, \emph{i.e.} a N-OSC, and the isochrone
foliation of the stable manifold $\mathbb{W}_s(\gamma)$. The
purple curves represent $4$ leaves in the foliation of
$\mathbb{W}_s(\gamma)$ (see \cref{sec1a:eq1}). The blue and orange
curves represent orbits of $\phi_{\tau}$ which approach $\gamma$
asymptotically with time.} \label{sec1a:fig1}
\end{figure}

We consider a $n$-dimensional, hyperbolically stable, oscillator
(N-OSC), $\mathbf{q} = (\phi_{\tau},\gamma)$, with PSS $x_s(\tau)
= x_s(\tau + 2\pi)$ (time-normalized). The oscillator stable
manifold defines the connected subset of $\mathbb{R}^n$ containing
all initial conditions which converge towards the oscillator
limit-cycle asymptotically with time $\lim_{\tau\rightarrow
\infty}\vert \phi_{\tau}(x_0) - \gamma \vert \rightarrow 0$ for \,
$x_0 \in \mathbb{W}_s$. The text herein considers electrical
oscillators with a single DC/start-up point, $x_q$, assumed to lie
at the origin\footnote{one can always assume a fixed-point at
the origin since a constant translation leaves the dynamics
unaffected. Hence an oscillator with DC/start-up point at $x_{q}=
\alpha \in \mathbb{R}^n$ is the same oscillator, dynamically
speaking, as an oscillator with fixed point at the origin $x_{q} =
0$. \label{sec1a:foot1}} which produces a stable manifold
consisting of $\mathbb{R}^n$ minus the origin \emph{i.e.}
$\mathbb{W}_s = \mathbb{R}^n \setminus\{ 0 \}$.
\par
It is a well-established
fact\cite{winfree1967,guckenheimer1975,djurhuus2009} that an open
subset $\mathbb{W}_s(\gamma) \subset \mathbb{W}_s$ of this
manifold, known as the stable manifold of $\gamma$ (dimension
$\geq n-2$), can be foliated by continuum of $(n-1)$-dimensional
hyper-surfaces known as \emph{isochrones} \emph{i.e.} equal
time/phase sets

\begin{equation}
\mathbb{W}_s(\gamma) = \underset{\eta \in [0,2\pi)}{\bigcup}
\bigl\{ \mathcal{I}_{\eta}\bigr\} \label{sec1a:eq1}
\end{equation}

where each leaf, $\mathcal{I}_{\eta}$, of this foliation contain
points in $\mathbb{W}_s(\gamma)$ with asymptotic oscillator phase
$\eta$. The topics discussed above are illustrated schematically
in \cref{sec1a:fig1}.

\subsection{Floquet theory \& the oscillator tangent-bundle}

\label{sec1b}

\begin{figure}[!h]
\begin{center}
\includegraphics[scale=1.0]{./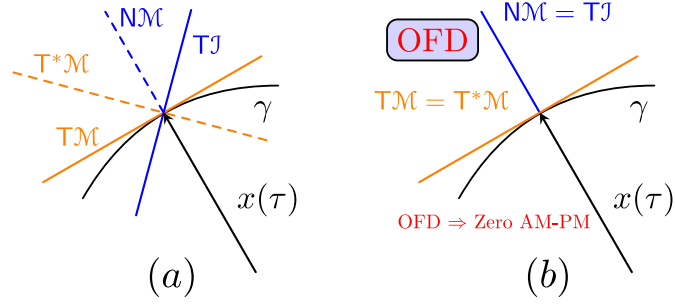}
\end{center}
\caption{\textbf{(a)} : the decomposition of the tangent-bundle
$\mathsf{T}_{\gamma}\mathbb{R}^n = \mathsf{T}\mathcal{M} \oplus
\mathsf{T}\mathcal{I}$ (limit-cycle/isochrone tangent spaces,solid line) and corresponding dual
tangent-bundle $\mathsf{T}_{\gamma}^*\mathbb{R}^n =
\mathsf{T}^*\mathcal{M} \oplus \mathsf{N}\mathcal{M}$ (dashed
line). This construct, based on the geometric notion of an
isochrone foliation of $\mathbb{W}_s(\gamma)$ (see
\cref{sec1a:fig1}). \textbf{(b)} : the orthogonal Floquet
decomposition (OFD) $\mathsf{T}_{\gamma}\mathbb{R}^n \in
\mathbf{B}^{\perp}$. For an orthogonal system the distinction
between basis and dual basis is erased.  } \label{sec1b:fig1}
\end{figure}

The N-OSC, $\mathbf{q} = (\phi_{\tau},\gamma)$, introduced above,
is hyperbolic and a unique corresponding set of $n$ (dual) Floquet vectors
$\{u_i(\tau),v_i(\tau)\}_{i=1}^n : \mathbb{R} \to \mathbb{R}^n$,
are then known to
exist~\cite{kartner1990,demir2000,traversa2011,djurhuus2009}.
These objects obey the bi-orthogonality condition (ODE systems)
$(v_i(\tau) , u_j(\tau) ) = \delta_{ij} \, \text{for} \, i =
1,2,\cdots n$ where $(\cdot,\cdot)$ designates the inner Euclidian
product and $\delta_{ij}$ is the Kroenecker delta-function.
\par
The fundamental-matrix map (F-MATRIX), $d\phi_{\tau} : \mathbb{R}
\times \mathbb{R}^n \to \mathbb{R}^{n\times n}$, is derived by
linearizing the full flow, $\phi_{\tau}$, around the limit-cycle,
$\gamma$. This map describes the oscillator linear-response (LR)
and hence governs orbits generated by weak perturbations around
$x_s(\tau) \in \gamma$. It can be decomposed as $d\phi_{\tau} =
\sum_{j=1}^n \exp(\mu_j \tau)u_j(\tau)v_j^{\top}(0)$ where $\mu_i$
is the time-normalized characteristic Floquet exponent
corresponding to $i$th (dual) Floquet
vectors\cite{demir2000,djurhuus2009}. It follows directly that
$u_i(0) {=} u_i(2\pi)$ is an eigenvector of the special F-MATRIX,
$d\phi_{2\pi} = \sum_{j=1}^n \exp(2\pi \mu_j
)u_j(2\pi)v_j^{\top}(0)$, known as the Monodromy Matrix
(M-MATRIX), with corresponding eigenvalue $\lambda_i =
\exp(2\pi\mu_i)$ being the $i$th Floquet characteristic
multiplier. Oscillator stability demands $\Re\{\mu_i\} \leq 0$
(real part $\leq 0$) for all $i$. Henceforth, the triple
$\{u_i(\tau),v_i(\tau),\mu_i\}$, or individual constituents of
this triple, will be referred to as the $i$th (Floquet) mode. The
dynamics of the circuits considered herein are real and modes
hence must appear in conjugate pairs (\emph{i.e.} $u_i =
u_{i+1}^*$) with \emph{real modes} (\emph{i.e.} modes with zero
imaginary parts) appearing as singles. For
oscillator solutions a special \emph{phase-mode},
$\{u_1(\tau),v_1(\tau),\mu_1 = 0\}$, is known to exist, describing
the neutrally stable dynamics tangential to the limit-cycle
$\gamma$. The associated dual vector, $v_1$, is known as the
\emph{perturbation-projection-vector} (PPV) \cite{demir2000,srivastava2007}. It
can be readily shown that $u_1$ is proportional to $\dot{x}_s$ and
we fix $u_1(\tau) =
\dot{x}_s(\tau)$~\cite{kartner1990,demir2000,traversa2011,djurhuus2009}.

\subsubsection{The oscillator tangent-bundle}

For a hyperbolic solution, the
Floquet collection $\{u_i(\eta)\}$, constitute a complete set and
hence form a basis for $\mathbb{R}^n$. The origin of this
vector-space is the PSS point, $x_s(\eta)$. This translated
vector-space is known as the \emph{tangent-space}
$\mathsf{T}_{\eta}\mathbb{R}^n = \text{span}\{ u_1(\eta) ,
u_2(\eta) , \cdots u_n(\eta) \}$ at $x_s(\eta)$. The disjoint
union of all these tangent-spaces, one for each point on the PSS
$x_s(\tau) = x_s(\tau+2\pi) \in \gamma$, is then is known as the
(oscillator) \emph{tangent-bundle}

\begin{equation}
\mathsf{T}_{\gamma}\mathbb{R}^n = \underset{\eta \in
[0,2\pi)}{\bigcup} \mathsf{T}_{\eta}\mathbb{R}^n = \text{span}\{
u_1(\eta) , u_2(\eta) , \cdots u_n(\eta) \} \label{sec1b:eq1}
\end{equation}

Both the limit-cycle and isochrone foliation (see
\cref{sec1a:fig1}) are invariant sets
under the flow  which then directly implies the following bundle
representation $\mathsf{T}_{\gamma}\mathbb{R}^n =
\mathsf{T}\mathcal{M} \oplus \mathsf{T}\mathcal{I}$ and from \cite{djurhuus2009}, $\mathsf{T}\mathcal{M} =
\cup \mathsf{T}_{\eta}\mathcal{M} = \text{span}\{ u_1(\eta)\}$,
$\mathsf{T}\mathcal{I} = \cup
\mathsf{T}_{\eta}\mathcal{I} = \text{span}\{ u_2(\eta) ,
u_3(\eta) , \cdots u_n(\eta) \}$ with $\eta \in [0,2\pi)$. The special mode, $u_1(\tau)$, spanning
$\mathsf{T}\mathcal{M}$, is the so-called \emph{phase-mode}
whereas the set $\{u_{i>1}(\tau)\}$ are the $(n-1)$
\emph{amplitude-modes}. Repeating the above discussion for the
adjoint F-MATRIX  we can derive the \emph{oscillator dual
tangent-bundle} $\mathsf{T}^{*}_{\gamma}\mathbb{R}^n {=}
\text{span}\{ v_1(\eta) , v_2(\eta) , \cdots v_n(\eta) \}$ which
is decomposed as $\mathsf{T}_{\gamma}^*\mathbb{R}^n =
\mathsf{T}^*\mathcal{M} \oplus \mathsf{N}\mathcal{M}$ where
$\mathsf{T}^*\mathcal{M} = \text{span}\{ v_1(\tau)\}$ (the PPV
bundle) and $\mathsf{N}\mathcal{M} = \text{span}\{ v_2(\tau) ,
v_3(\tau) , \cdots v_n(\tau) \}$ \cite{djurhuus2009}.
\par
Let $\mathbf{B}$ be the set of all possible tangent-bundles of the
form in \cref{sec1b:eq1}. The subset, $\mathbf{B}^{\perp} \subset
\mathbf{B}$, then hold all orthogonal bundle decompositions, or
more specifically, all \emph{orthogonal Floquet decomposition} (OFD)

\begin{equation}
\mathbf{B}^{\perp} = \bigl\{ \mathsf{T}_{\gamma}\mathbb{R}^n \in
\mathbf{B} : \,\, u_k(\tau) \perp u_j(\tau) \, \, \text{for} \,\,
k\neq j \,\, , \,  \forall \tau \bigr\} \label{sec1b:eq5a}
\end{equation}

where $\perp$ and $\forall$ are mathematical symbols for
\emph{orthogonal} and \emph{for all}. Let $\mathbf{O}$ be the set
of all hyperbolic, stable oscillators. The subset
$\mathbf{O}^{\perp} \subset \mathbf{O}$, then contain the special
\emph{OFD oscillators}

\begin{equation}
\mathbf{O}^{\perp} = \bigl\{ \, \mathbf{q} = (\phi_{\tau},\gamma)
\in \mathbf{O} \, : \, \mathsf{T}_{\gamma}\mathbb{R}^n \in
\mathbf{B}^{\perp} \, \bigr\} \label{sec1b:eq5}
\end{equation}

thus $\mathbf{O}^{\perp}$ hold all the oscillators with a tangent
bundle in $\mathbf{B}^{\perp}$. The concepts discussed here are
illustrated in \cref{sec1b:fig1}.
\par
Consider a noise signal $\varrho(\tau) : \mathbb{R} \to
\mathbb{R}^n$, perturbing the oscillator PSS. In order to
facilitate a discussion of AM-PM noise conversion, an orthogonal
frame $\{e_j(\tau)\}_{i=1}^n$, moving over (\emph{i.e.} with
origin at) $x_s(\tau)$, is introduced with $e_1(\tau)$ being
tangent to $\gamma$ implying $e_1(\tau) \parallel u_1(\tau)$. The
remaining $n-1$ components $\{e_j(\tau)\}_{i=2}^n$, all correspond
to directions orthogonal to $\gamma$ and can then viewed as
contra-variant versions of the set $\{v_j(\tau)\}_{i=2}^n$
spanning $\mathbb{NM}$ (see \cref{sec1b:fig1}). The noise-signal
is then decomposed as $\varrho(\tau) = \text{PM-noise + AM-noise}
= a_1(\tau)e_1(\tau) + \sum_{i = 2}^n a_i(\tau) e_i(\tau)$. Note,
that we are free to decompose the noise using any frame. The
calculated phase-noise spectrum is unaffected. The Floquet frame
is only inherent to the oscillator LR itself not to any perturbing
signal. From the bi-orthogonality condition $(v_i(\tau),u_j(\tau))
= \delta_{ij}$, discussed above, together with the OFD oscillator
definition in \cref{sec1b:eq5a,sec1b:eq5}, the phase-mode $u_1$
and PPV $v_1$ must be parallel, $u_1(\tau)
\parallel v_1(\tau)$, for all $\tau$ (see also \cref{sec1b:fig1}.b).
In this OFD scenario, the PPV hence only collects noise along $u_1
\propto \dot{x}_s$, and there will hence be no integration of
AM-noise (as defined herein) into the phase of the oscillator
which is the definition of zero AM-PM noise conversion.

\begin{remark}
An OFD oscillator, $\mathbf{q} \in \mathbf{O}^{\perp}$, has zero
AM-PM noise conversion.
\end{remark}

\section{Topological conjugate oscillators}

\label{sec2}

We consider the two $n$-dimensional open sets $\mathbb{U} =
\mathbb{W}_s = \mathbb{R}^n\backslash \{ 0\}$ parameterized by
coordinates $y = (y_1, y_2, \cdots, y_n)$ and $x = (x_1, x_2,
\cdots, x_n)$, respectively. Here $\mathbb{W}_s$ is the stable
manifold for the N-OSC, $\mathbf{q} = (\phi_{\tau},\gamma)$,
generated by the ODE $\dot{x} = f(x)$. The new domain, $\mathbb{U}$, known herein as
the \emph{parametrization manifold}, is the stable-manifold
of the \emph{normal-form oscillator} (NF-OSC), $\mathbf{o} =
(\psi_{\tau},\xi)$, generated by the ODE $\dot{y} = g(y)$.
\par
The theory of topological conjugate flows
\cite{wiggins1990,kuznetsov2013,frankel2011,cabre2003,cabre2003_2,cabre2005},
loosely speaking, describes a scenario wherein the NF-OSC,
$\mathbf{o}$, on $\mathbb{U}$, is used to \emph{parameterize} the
N-OSC, $\mathbf{q}$, on $\mathbb{W}_s$. The idea is to construct a
canonical (simple) model on $\mathbb{U}$ and
then perform all analysis on this simplified representation.
This concept is akin to applying
a basis change in standard linear-algebra in-order to simplify
solution procedure. The setup discussed here is fully symmetric,
($\mathbb{U} = \mathbb{W}_s$), which implies that $\mathbf{q}$ on
$\mathbb{W}_s$ can equally well be said to parameterizes
$\mathbf{o}$ on $\mathbb{U}$. However, in-order not to
unnecessarily complicate or confuse matters we stick with the
picture developed above (for now) wherein $\mathbb{U}$ contains
our simple normal-form oscillator (NF-OSC) $\mathbf{o} =
(\psi_{\tau},\xi)$ whereas $\mathbb{W}_s$ contains the
complex/real-life oscillator $\mathbf{q} = (\phi_{\tau},\gamma)$
(N-OSC) which we seek to model/parameterize (see also
\cref{sec2a:fig1} at this point).
\par
The theory of topological conjugacy, on which
our analysis herein is based, is an established branch of
dynamical systems theory with research stretching back
decades, if not
centuries\cite{kuznetsov2013,wiggins1990,frankel2011,huguet2012,huguet2013,huguet2018,cabre2003,cabre2003_2,cabre2005}.

\subsection{Basic Theory}

\label{sec2a}

Let $h : \mathbb{R}^n \to \mathbb{R}^n$ be a smooth transformation
between points on $\mathbb{U}$ and $\mathbb{W}_s$, respectively,
\emph{i.e.} $h(y) = x$, and let $\delta y,\delta x$ be vectors in
the respective tangent-spaces,
$\mathbb{T}_{u}\mathbb{R}^{n},\mathbb{T}_{x}\mathbb{R}^{n}$ (see
\cref{sec1b}). We then have

\begin{equation}
h(y) = x \Rightarrow dh(y) \delta y = \delta x \label{sec2a:eq1}
\end{equation}

where $dh : \mathbb{R}^n \to \mathbb{R}^{n\times n}$ is the
Jacobian matrix of the map $h$. \Cref{sec2a:eq1} is an axiomatic
(self-evident/explanatory) identity which follows directly from
standard theory of smooth maps
\cite{frankel2011,wiggins1990,kuznetsov2013}; \emph{i.e.} $h$ maps
between points whereas the Jacobian $dh$ maps between vectors in
tangent-spaces at these points. The vector-field, $g$, on
$\mathbb{U}$ thus maps to a \emph{topological equivalent}
vector-field, $f$, on $\mathbb{W}_s$ as $dh(y)g(y) = f(x) =
f(h(y))$. By varying the map $h$ every possible equivalent field
on $\mathbb{W}_s$ can be thus constructed. We see that have
effectively \emph{parameterized} $f$ on $\mathbb{W}_s$ in-terms of
the model field, $g$, on $\mathbb{U}$. Integrating this relation,
\emph{w.r.t.} $\tau$, gives $h(\psi_{\tau}(y)) = \phi_{\tau}(h(y))
\Leftrightarrow \psi_{\tau} = h^{-1} \circ \phi_{\tau} \circ h$
where the relation between fields and flows (\emph{i.e.}
$\partial(\phi_{\tau}(x))/\partial \tau {=} f(x)$, see text in
\cref{sec1}) was used and $s \circ p \equiv s(p)$ denotes
composition of functions $s$ and $p$. Assuming
time-parametrization is preserved, which will be the case for
hyperbolic systems, the flows are said to be
\emph{topological-conjugate} \cite{kuznetsov2013,wiggins1990}. The
conjugation relation is represented herein by the operator
$\sim_h$, or simply $\sim$, and the conjugation of oscillators
$\mathbf{o} = (\psi_{\tau},\xi)$ and $\mathbf{q} =
(\phi_{\tau},\gamma)$ is then written

\begin{equation}
\psi \sim_h \phi := \psi_{\tau} = h^{-1} \circ \phi_{\tau} \circ h
\label{sec2a:eq2}
\end{equation}

where below we also use the notation $\mathbf{o} \sim \mathbf{q}$.
\Cref{sec2a:eq2} directly implies a similar conjugacy of the
corresponding iterated maps $\psi^{(k)} \sim \phi^{(k)}$ where
$\alpha^{(k)} = \alpha \circ \alpha \cdots \circ \alpha$ ($k$
times). The operator $\sim$ is an
equivalence relation\footnote{
to show this we need to prove reflexivity, symmetry and transitivity  \cite{wiggins1990,kuznetsov2013}. The
operator is reflexive $\kappa \sim_h \kappa$ for $h=id$ (the
identity map). Then $\kappa \sim_h \theta \Leftrightarrow \theta
\sim_g \kappa$ for $g = h^{-1}$ proves symmetry. Let $\theta =
f\circ \kappa \circ f^{-1}$ and $\theta = g^{-1} \circ \sigma
\circ g$. Then $f\circ \kappa \circ f^{-1} = g^{-1} \circ \sigma
\circ g \Leftrightarrow (g \circ f) \circ \kappa = \sigma \circ (
g \circ f)$ and $\kappa = ( g \circ f )^{-1} \circ \sigma \circ (
g \circ f)$. Thus $\kappa \sim_f \theta \wedge \theta \sim_g
\sigma \Rightarrow \kappa \sim_h \sigma$ with $h = f \circ g$
proving transitivity.\label{sec2a:foot1}}.

\begin{figure}[!h]
\begin{center}
\includegraphics[scale=1.0]{./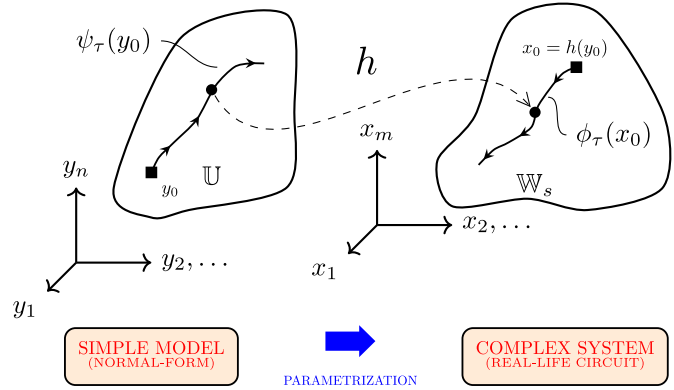}
\end{center}
\caption{The conjugation, $h$, maps orbits $\psi_{\tau}(u_0)$,
with initial condition (point at $\blacksquare$) $y_0$ in the
parameter space $\mathbb{U}$, to orbits $\phi_{\tau}(x_0)$ on
$\mathbb{W}_s$, corresponding to the initial condition $x_0 =
h(u_0)$.} \label{sec2a:fig1}
\end{figure}

Given the orbit, $\psi_{\tau}(y_0)$, on $\mathbb{U}$, with initial
condition $\psi_0(y_0) = y_0$, the conjugate flow,
$\phi_{\tau}(x_0)$, will correspond to the orbit with initial
condition $\phi_{0}(x_0) = x_0 = h(y_0)$ on $\mathbb{W}_s$ and we
can write \cref{sec2a:eq2} as

\begin{equation}
h \circ \psi_{\tau}(y_0) = \phi_{\tau}(x_0) \label{sec2a:eq3}
\end{equation}

which shows that $h$ maps orbits in $\mathbb{U}$ onto orbits in
$\mathbb{W}_s$ while keeping the time parametrization.
\Cref{sec2a:fig1} gives a schematic illustration of the topics
discussed here. Consider an \emph{invariant-set} $Q$ in
$\mathbb{U}$ \emph{i.e.} $\psi_{\tau}(Q) \subset Q$ for all
$\tau$. From the conjugation relation \cref{sec2a:eq3} it then
follows that $\phi_{\tau}(h(Q)) = \phi_{\tau}(S) = h \circ
\psi_{\tau}(Q) \subset h(Q) = S$, where $S = h(Q)$, and

\begin{equation}
\psi_{\tau}(Q) \subset Q \Leftrightarrow \phi_{\tau}(S) \subset S
\label{sec2a:eq4}
\end{equation}

where the left arrow follows from considering the inverse
transformation $h^{-1}$ (\emph{i.e.} the map from $\mathbb{W}_s$
to $\mathbb{U}$). An invariant set, under $\psi_{\tau}$ on
$\mathbb{U}$, $Q$, thus corresponds to an invariant set, $S =
h(Q)$, under the conjugated flow, $\phi_{\tau}$, on
$\mathbb{W}_s$; and vice-versa. Repeating this analysis for the
iterated map version of \cref{sec2a:eq2} it follows that invariant
sets under the iterated map are also preserved. The oscillator
limit-cycles $\xi,\gamma$, are invariant sets of the flows
$\psi_{\tau},\phi_{\tau}$ on $\mathbb{U}$ and $\mathbb{W}_s$,
respectively, and from \cref{sec2a:eq4}

\begin{equation}
h(\xi) = \gamma \label{sec2a:eq5}
\end{equation}

Likewise, the leaves,of the isochrone
foliation on $\mathbb{U}(\xi)$ are an invariant
of $2\pi$-iterated map, $\psi_{2\pi}$.

\subsection{Choosing a specific NF-OSC template model : the PNF-OSC}

\label{sec2b}

\begin{figure}
\begin{center}
\includegraphics[scale=1.0]{./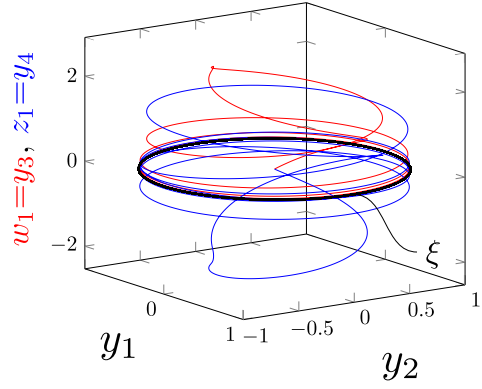}
\end{center}
\caption{Transient orbits, and limit-cycle $\xi$, of the PNF-OSC
system defined in  \cref{sec2b:eq1}. The example system considered
here is $5$-dimensional with $m=1,k=1$ ($1$ additional real + $1$
complex mode) described in-terms of coordinates $\bar{y} =
(r,\phi,w_1,z_1,z_2) \in \mathbb{R}^5$ where $w_1 = y_3$, $z_1 =
y_4$ and $z_2 = y_5$ (see text). Parameters are fixed as
$(\mu,\beta_1,\sigma_1,\nu_1) = (0.5,0.2,0.1,0,3)$. Two cross
sections are shown where $(y_1,y_2,w_1=y_3)$ (red curve) shows the
effects of the additional real mode whereas $(y_1,y_2,z_1=y_4)$
(blue curve) which shows the effects of the complex mode.}
\label{sec2b:fig1}
\end{figure}

We consider the $n$-dimensional parameter manifold $\mathbb{U}$
indexed by the Cartesian coordinate set $y = (y_1, y_2, \cdots,
y_n) \in \mathbb{R}^n$ (see \cref{sec2} and \cref{sec2a:fig1}).
Let $r = \sqrt{y_1^2 + y_2^2}$, $\phi = \arctan(y_2/y_1)$ be the
polar coordinates indexing the $y_1,y_2$ plane and consider the
transformed coordinates $\bar{y} = (\phi , r , w ,z) \in
\mathbb{R}^n$ where sub-coordinate vectors $w \in \mathbb{R}^m$
and $z \in \mathbb{R}^{2k}$ contain the remaining $n{-}2$
$y$-coordinates\footnote{here coordinate sub-vectors $w,z$ simply
represent place-holders for the remaining $n-2$ coordinates
$(y_3,y_4,\cdots,y_n)$ with the constraint that sub-vector $z$ is
of even dimension $2k$. Hence $w = (y_3,y_4,\cdots y_{m+2}) \in
\mathbb{R}^m$ and $z = (y_{m+3},y_{m+4}, \cdots y_{m+2 +2k}) \in
\mathbb{R}^{2k}$ where the dimensions of these sub-vectors are
chosen such that $n-2 = m + 2k \Rightarrow n = m + 2 +
2k$.\label{sec2b:foot2}}. As $\bar{y}$ is an $n$-dimensional
coordinate system the dimension of these sub-vectors are
constrained through $n = 2 + m +2k$ (see \cref{sec2b:foot2}). As
$n\geq 2$ is assumed, it always possible to find two numbers $m,k
\geq 0$ such that this constraint is upheld.

\begin{proposition}
The new coordinates $\bar{y} = ( \phi , r , w, z ) \in
\mathbb{R}^n$, indexing the parameter manifold $\mathbb{U}$, are
orthogonal. \label{sec2b:prop1}
\end{proposition}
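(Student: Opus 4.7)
The plan is to verify orthogonality directly by computing the Jacobian of the coordinate change $y \mapsto \bar{y}$ and showing that the induced metric tensor, carried over from the Euclidean inner product $(\cdot,\cdot)$ on the Cartesian chart $y$, is diagonal. Recall that a coordinate system is \emph{orthogonal} precisely when the coordinate basis vector fields $\partial/\partial\bar{y}^\alpha$ are mutually orthogonal at every point, equivalently when the metric components $g_{\alpha\beta} = (\partial/\partial\bar{y}^\alpha,\partial/\partial\bar{y}^\beta)$ vanish for all $\alpha \neq \beta$.

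First I would exploit the block structure of the transformation. The change of coordinates acts non-trivially only on the first two Cartesian variables, through $y_1 = r\cos\phi$, $y_2 = r\sin\phi$, while on the remaining $n-2$ components it is merely a relabelling into the sub-vectors $w \in \mathbb{R}^m$ and $z \in \mathbb{R}^{2k}$ (see \cref{sec2b:foot2}). Consequently the coordinate vector fields $\partial/\partial w_i$ and $\partial/\partial z_j$ coincide with the standard Euclidean unit vectors $e_{i+2}$ and $e_{j+m+2}$ in $\mathbb{R}^n$, and are therefore mutually orthogonal and supported entirely in the subspace $\{y_1 = y_2 = 0\}$.

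Second I compute the polar basis vectors via the chain rule, obtaining $\partial/\partial r = (\cos\phi,\sin\phi,0,\ldots,0)$ and $\partial/\partial\phi = (-r\sin\phi,\,r\cos\phi,0,\ldots,0)$. Their Euclidean inner product is $-r\cos\phi\sin\phi + r\sin\phi\cos\phi = 0$, and since both vectors have vanishing components past index two, they are automatically orthogonal to every $\partial/\partial w_i$ and $\partial/\partial z_j$ identified above. Assembling these pieces, the metric tensor in the $\bar{y}$ chart is diagonal, with entries $(r^2,1,1,\ldots,1)$ along the $(\phi,r,w,z)$ axes, which is exactly the claim.

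The statement reduces in essence to the classical fact that planar polar coordinates are orthogonal, augmented by the trivial observation that relabelling a set of mutually orthogonal Cartesian directions leaves them mutually orthogonal. The only point requiring a little care is making explicit that the orthogonality referred to here is orthogonality of coordinate basis vectors in the inherited Euclidean metric on $\mathbb{U}\subset\mathbb{R}^n$, not orthonormality (the $\partial/\partial\phi$ vector has norm $r$, not $1$); beyond clarifying this convention, I anticipate no real obstacle.
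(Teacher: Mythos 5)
Your proposal is correct and follows essentially the same route as the paper's proof: both decompose the transformation into the polar block on the $(y_1,y_2)$-plane and the untouched Cartesian complement held in $w,z$, and conclude orthogonality of the assembled chart. The only difference is that you verify the orthogonality of $(\phi,r)$ by an explicit metric computation (obtaining the diagonal metric $\mathrm{diag}(r^2,1,\ldots,1)$), whereas the paper simply cites the classical fact; your version is a little more self-contained but not a different argument.
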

\begin{proof}

The polar coordinates $(\phi,r)$, indexing the $y_1,y_2$ plane,
represent an orthogonal coordinate system in this plane
\cite{frankel2011}. The remaining coordinates $(y_3,\cdots , y_n )
\in \mathbb{R}^{n-2}$ are orthogonal to the $y_1,y_2$ plane and
furthermore comprised of Cartesian coordinate functions; which are
of-course, by definition, orthogonal. However, coordinate
sub-vectors $w,z$ are simply containers holding this remaining set
of Cartesian $y$-coordinates (see text above and
\cref{sec2b:foot2}) and it hence follows trivially that $\bar{y}=
(\phi , r , w , z ) = (\phi , r , y_3, y_4,\cdots y_n ) \in
\mathbb{R}^n$ is an orthogonal coordinate system.
\end{proof}

In these new coordinates, the dynamics of the NF-OSC on
$\mathbb{U}$, is modelled in-terms of the autonomous ODE, $\dot{y}
= g(y)$, of the form

\begin{equation}
\begin{aligned}
 &\dot{\phi} =  1 \\
 &\dot{r}  = \mu r(1-r) \\
 &\dot{w}_{i} =  -\beta_iw_i \quad   &i = 1,2,\cdots , m\\
 &\dot{z}_{2i-1} = -\sigma_i z_{2i-1} + \nu_iz_{2i} \quad  &i = 1,2,\cdots , k \\
 &\dot{z}_{2i} = -\sigma_i z_{2i} - \nu_iz_{2i-1} \quad   &i = 1,2,\cdots , k \label{sec2b:eq1}
\end{aligned}
\end{equation}

with $\mu,\alpha,\beta_i,\sigma\in \mathbb{R}^+$ are a collection
of positive real parameters whereas $\nu_i \in
\mathbb{R}\backslash \{0\}$ is a non-zero real parameter. From
inspection, it follows that the system in \cref{sec2b:eq1}
generates a single stable limit-cycle set

\begin{equation}
\xi = \bigl\{ \bar{y} \in \mathbb{U} \, \, : \, \,
(\phi,r,\{w,z\}) = [0;2\pi)\times 1 \times \{\boldsymbol{0}\}\bigr
\} \label{sec2b:eq2}
\end{equation}

with $\{\boldsymbol{0}\} = 0\times 0 \times 0 \times \cdots \times
0$ (m+2k times) and \cref{sec2b:eq2} hence simply describes a unit
circle in the $y_1,y_2$ plane. Henceforth, the NF-OSC, $\mathbf{o}
= (\psi_{\tau},\xi)$, where $\psi_{\tau}$ is the flow on
$\mathbb{U}$ generated by integrating the ODE in \cref{sec2b:eq1},
will be known as the PNF-OSC (polar normal-form).

\begin{proposition}
\label{sec2b:prop2}the PNF-OSC system, defined in-terms of the ODE
in \cref{sec2b:eq1}, belong to the OFD oscillator class,
$\mathbf{o} \in \mathbf{O}^{\perp}$.
\end{proposition}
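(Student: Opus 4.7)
The strategy is to exploit the block-diagonal structure of the variational equation of \cref{sec2b:eq1} in the orthogonal coordinates $\bar y=(\phi,r,w,z)$ supplied by \cref{sec2b:prop1}: once the Floquet modes are read off in $\bar y$ and pushed forward to Cartesian $y$, orthogonality of the tangent bundle is an immediate consequence of the orthogonality of the coordinate frame itself.

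First I would linearize \cref{sec2b:eq1} about $\xi=\{r=1,\ w=z=0\}$. Because the right-hand side written in $\bar y$ is independent of $\phi$, the variational equation has constant coefficients and splits into four independent blocks: $\delta\dot\phi=0$; $\delta\dot r=-\mu\,\delta r$; the diagonal blocks $\delta\dot w_i=-\beta_i\,\delta w_i$ for $i=1,\dots,m$; and, for each pair $(z_{2i-1},z_{2i})$, a $2\times 2$ rotation-dissipation block with matrix $\bigl(\begin{smallmatrix}-\sigma_i & \nu_i\\ -\nu_i & -\sigma_i\end{smallmatrix}\bigr)$, whose eigenvalues form a complex conjugate pair with real part $-\sigma_i$ and imaginary part $\pm\nu_i$. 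The Floquet exponents of $\mathbf{o}$ are therefore $0$, $-\mu$, the set $\{-\beta_i\}$, and the $k$ conjugate pairs just listed, each with $\Re\mu_k\le 0$ as required for stability.

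Next I would identify the corresponding Floquet vectors. Since the generator is constant in $\bar y$ and block-diagonal, the Floquet vectors are time-independent in $\bar y$ and lie along the coordinate directions $\partial_\phi$, $\partial_r$, $\partial_{w_i}$, or in the respective $(\partial_{z_{2i-1}},\partial_{z_{2i}})$ planes. Pushing forward to Cartesian at $x_s(\tau)=(\cos\tau,\sin\tau,0,\dots,0)$ via the polar frame gives $u_1(\tau)=(-\sin\tau,\cos\tau,0,\dots,0)$ (which agrees with the fixed convention $u_1=\dot x_s$), $u_r(\tau)=(\cos\tau,\sin\tau,0,\dots,0)$, and standard basis vectors $e_{w_i}$ and their $z$-analogs for the remaining blocks; all are $2\pi$-periodic as Floquet theory requires.

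Orthogonality now follows essentially for free from \cref{sec2b:prop1}: because $\bar y$ is an orthogonal coordinate system at every point of $\xi$, Floquet vectors belonging to distinct blocks occupy mutually orthogonal subspaces of $\mathbb{R}^n$. The only in-plane check is the single computation $u_1(\tau)\cdot u_r(\tau)=-\sin\tau\cos\tau+\cos\tau\sin\tau=0$, and every remaining cross-product vanishes trivially because the mode vectors are supported on disjoint coordinate axes. I would therefore conclude that $\mathsf{T}_\xi\mathbb{R}^n\in\mathbf{B}^\perp$ and hence $\mathbf{o}\in\mathbf{O}^\perp$. The one point requiring a word of care --- which I would dispose of as a short remark rather than treat as a real obstacle --- is that the two complex conjugate members of a single $z$-block are not Euclidean-orthogonal when viewed as complex vectors, but together they span a real two-dimensional invariant subspace orthogonal to every other mode subspace, which is the reading of \cref{sec1b:eq5a} compatible with real dynamical systems.
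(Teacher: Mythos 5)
Your proposal is correct and follows essentially the same route as the paper: the paper's proof likewise reads the Floquet vectors off the constant, block-diagonal Jacobian along $\xi$ (computed in Appendix~\ref{app1}) as the coordinate vectors of the orthogonal frame $\bar{y}=(\phi,r,w,z)$ from \cref{sec2b:prop1}, and concludes $\mathsf{T}_{\xi}\mathbb{R}^n\in\mathbf{B}^{\perp}$. You merely carry out more of the computation inline (the explicit Cartesian push-forward, the $u_1\cdot u_r=0$ check, and the remark on the real $2$-plane spanned by each complex pair), and your linearized radial rate $-\mu$ is in fact the correct value where the appendix writes $-2\mu$.
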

\begin{proof}
From \cref{sec1b:eq1} the bundle $\mathsf{T}_{\gamma}\mathbb{R}^n$
is spanned by the Floquet vectors $(u_1(\tau),u_2(\tau),\cdots ,
u_n(\tau))$. From \cref{app1:eq3} in appendix \ref{app1} we have
$\mathsf{T}_{\xi}\mathbb{R}^n = \text{span}(\hat{\phi} , \hat{r} ,
\{ \hat{w}_i\}_{i=1}^m , \{ \hat{z}_{2i} \pm j\hat{z}_{2i-1}
\}_{i=1}^{k})$ with the notation, $\hat{x}$, representing the
coordinate-vector corresponding to coordinate function, $x$. From
\cref{sec2b:prop1} the coordinate system $\bar{y} = (\phi , r , w
, z)$ is orthogonal which implies that the corresponding
coordinate vectors $\hat{\phi},\hat{r}, \hat{w}_i$ \emph{etc.} are
orthogonal. By definition (see \cref{sec1b:eq5a})
$\mathsf{T}_{\xi}\mathbb{R}^n \in \mathbf{B}^{\perp}$ and from the
definition in \cref{sec1b:eq5} we get $\mathbf{o} \in
\mathbf{O}^{\perp}$.
\end{proof}

The dynamics corresponding to the $w$ and $z$ coordinate sets
generate $m$ real and $k$ imaginary stable Floquet modes. The
orbits of an example 5-dimensional PNF-OSC system ($m = k = 1$)
were calculated by numerically integrating \cref{sec2b:eq1} and
the resulting curves are plotted in \cref{sec2b:fig1} together
with the limit-cycle, $\xi$, defined in \cref{sec2b:eq2}.

\begin{figure}
\begin{center}
\includegraphics[scale=1.0]{./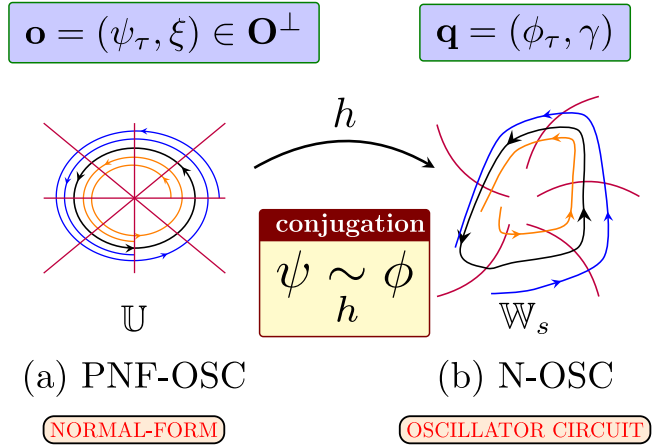}
\end{center}
\caption{ \textbf{(a)} The PNF-OSC oscillator $\mathbf{o} =
(\psi_{\tau},\xi) \in \mathbf{O}^{\perp}$ with stable
(parametrization) manifold $\mathbb{U}$. \textbf{(b)} the unspecified N-OSC
oscillator $\mathbf{q} = (\phi_{\tau},\gamma)$ with stable
manifold $\mathbb{W}_s$. The PNF-OSC provides a parametrization of
the N-OSC in $\mathbb{W}_s$ in-terms of the conjugation map $h$.
This parametrization is faithful, mapping the limit-cycle in
$\mathbb{U}$, $\xi$, onto the corresponding limit-cycle in
$\mathbb{W}_s$, $\gamma$, and the individual isochrone foliation
leaves (purple curves) in $\mathbb{U}$ onto the corresponding sets
in $\mathbb{W}_s$ (see discussion in \cref{sec1a}).}
\label{sec2b:fig2}
\end{figure}

\subsection{The set $\mathbf{H}_C$ and operator $\overset{c}{\sim}_h$}

\label{sec2c}

We seek to use parametrization, $\mathbf{o} \sim \mathbf{q}$,
in-order to identify all N-OSC oscillators $\mathbf{q} =
(\phi_{\tau},\gamma)$ on $\mathbb{W}_s$ which belong to the class
$\mathbf{O}^{\perp}$. From \cref{sec2b:prop2}, $\mathbf{o}\in
\mathbf{O}^{\perp}$, and this in-turn places certain limitations on $h\bigl|_{\xi}$ ($h$ restricted to $\xi$)

\begin{proposition}
\label{sec2c:prop1}
Let $\mathbf{o} \sim_h \mathbf{q}$ where $\mathbf{o} =
(\psi_{\tau},\xi) \in \mathbf{O}^{\perp}$ is the PNF-OSC defined
in \cref{sec2b:eq1,sec2b:eq2}. Then $\mathbf{q} \in
\mathbf{O}^{\perp} \Leftrightarrow h\bigl|_{\xi} \text{is
conformal (angle preserving)}$.
\end{proposition}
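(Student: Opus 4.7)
The plan is to differentiate the conjugation relation so as to translate the OFD condition on $\mathbf{q}$ into a pointwise geometric condition on the Jacobian $dh$ along $\xi$. First, I would differentiate $h \circ \psi_{\tau} = \phi_{\tau} \circ h$ (\cref{sec2a:eq2}) at an arbitrary $y_{0} \in \xi$ to obtain the intertwining identity
\[
dh(\psi_{\tau}(y_{0})) \cdot d\psi_{\tau}(y_{0}) \;=\; d\phi_{\tau}(h(y_{0})) \cdot dh(y_{0}),
\]
whose specialisation to $\tau = 2\pi$ makes $dh(y_{0})$ a similarity between the monodromy matrices of $\mathbf{o}$ and $\mathbf{q}$. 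The Floquet vectors of $\mathbf{q}$ are therefore the $dh$-images of those of $\mathbf{o}$, i.e.\ $u_{i}^{\gamma}(\tau) = dh(y(\tau))\,u_{i}^{\xi}(\tau)$ with $y(\tau) = \psi_{\tau}(y_{0}) \in \xi$.

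Second, by \cref{sec2b:prop2} the PNF-OSC Floquet frame $\{u_{i}^{\xi}(\tau)\}$ is orthogonal for every $\tau$. Hence the condition $\mathbf{q} \in \mathbf{O}^{\perp}$ reads
\[
\bigl\langle u_{i}^{\xi}(\tau),\; G(y(\tau))\, u_{j}^{\xi}(\tau)\bigr\rangle = 0 \quad\text{for all } i \neq j \text{ and all } \tau,
\]
where $G(y) := dh(y)^{T} dh(y)$; that is, the symmetric positive-definite pullback matrix $G(y)$ must be diagonal in the PNF-OSC Floquet basis at every $y \in \xi$. The $(\Leftarrow)$ direction of the proposition then follows immediately: if $h|_{\xi}$ is conformal, so $G(y) = \lambda(y) I$, then $dh(y)$ sends every pair of orthogonal vectors to an orthogonal pair, and in particular sends the PNF-OSC Floquet frame to an orthogonal Floquet frame at $\gamma$.

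For the $(\Rightarrow)$ direction I would exploit the explicit structure of the PNF-OSC Floquet frame (see the proof of \cref{sec2b:prop2}): in the $(y_{1},y_{2})$-plane the modes $\hat{\phi}(\tau),\hat{r}(\tau)$ sweep through every angular rotation as $\tau$ varies over $[0,2\pi)$, each complex invariant $(\hat{z}_{2i-1},\hat{z}_{2i})$-plane carries a pair rotating at frequency $\nu_{i}$, and the real modes $\hat{w}_{i}$ point in fixed directions. The mixed rotating-versus-fixed orthogonality constraints, collected as $\tau$ ranges over $[0,2\pi)$, sweep out all directions in each invariant plane and force the off-diagonal blocks of $G(y)$ between these planes and the transverse directions to vanish. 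The remaining in-plane rotating-basis constraints, combined with the smoothness and integrability of $G = dh^{T}dh$ as a pullback of the Euclidean metric, then force the $(2\times 2)$ blocks of $G$ to be scalar multiples of the identity on $\xi$, yielding $G(y) = \lambda(y) I$ for every $y \in \xi$, which is the conformality of $dh|_{\xi}$.

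The main obstacle lies in the $(\Rightarrow)$ direction: extracting the full conformal condition $G = \lambda I$, rather than the weaker diagonality-of-$G$-in-the-Floquet-basis, requires combining the rotating-basis orthogonality conditions accumulated over $\tau \in [0,2\pi)$ with the rigidity coming from $G$ being a pullback of the Euclidean metric by a smooth diffeomorphism. Without this integrability ingredient the rotating-eigenvector condition alone would admit non-conformal matrix fields; it is the interplay of the pointwise Floquet orthogonality with the global smooth-diffeomorphism structure of $h$ that delivers the equivalence, setting up the subsequent application of Liouville/Schottky rigidity that narrows $h$ down to the form $\rho A y$ in step 4 of the main argument.
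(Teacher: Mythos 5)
Your setup and your $(\Leftarrow)$ direction are sound and essentially coincide with the paper's own (much terser) proof: the paper likewise uses \cref{sec2a:eq1} to note that $dh$ carries the orthogonal PNF-OSC Floquet frame onto the N-OSC Floquet frame, and then simply asserts that the image bundle is orthogonal ``if, and only if,'' $dh$ is angle-preserving along $\xi$. Your reduction of the condition $\mathbf{q}\in\mathbf{O}^{\perp}$ to the statement that $G(y)=dh(y)^{T}dh(y)$ is diagonal in the PNF-OSC Floquet basis at every $y\in\xi$ is exactly right, and you correctly identify the crux that the paper glosses over: diagonality of $G$ in one particular orthogonal frame is strictly weaker than the conformality condition $G=\lambda I$.

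The difficulty is that your proposed mechanism for closing this gap does not work as sketched. The orthogonality constraints you want to ``accumulate over $\tau\in[0,2\pi)$'' live at different base points: as $\tau$ varies, $y(\tau)=\psi_{\tau}(y_{0})$ moves around $\xi$, so each constraint applies to a different matrix $G(y(\tau))$. At any single fixed $y\in\xi$ there is exactly one Floquet frame and hence exactly one diagonality condition on $G(y)$; no sweeping of directions occurs at a fixed point. (Moreover, by \cref{app1:eq3} the amplitude eigenvectors $\hat{w}_{i}$ and $\hat{z}_{2i}\pm j\hat{z}_{2i-1}$ are constant --- the frequency $\nu_{i}$ sits in the eigenvalues, not in the eigenvector directions --- so the rotating-frame picture is only accurate for the $(\hat{\phi},\hat{r})$ pair, which rotates with the base point rather than at a fixed one.) Concretely, a field of matrices that is diagonal in the frame $(\hat{\phi},\hat{r},\hat{w}_{i},\dots)$ with distinct positive entries maps the PNF-OSC frame to an orthogonal frame at every point of $\xi$ without being conformal anywhere; to exclude such fields one needs the additional rigidity coming from $dh$ being the differential of a single smooth conjugacy $h$, which you gesture at (``integrability of $G$ as a pullback metric'') but never actually deploy. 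To be fair, the paper's own proof supplies no such ingredient either --- it asserts the equivalence outright --- so your instinct that the $(\Rightarrow)$ direction requires more work is correct, but the argument as written leaves the same hole open.
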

\begin{proof}

From \cref{sec2a:eq1} it follows that at every point of the
PNF-OSC PSS orbit $y_s(\tau) \in \xi$, the Jacobian $dh$ maps
tangent-spaces in $\mathbb{T}_{\xi}\mathbb{R}^n$ onto
tangent-spaces $\mathbb{T}_{\gamma}\mathbb{R}^n$ of the conjugated
orbit $x_s(\tau) \in \gamma$. From \cref{sec2b:prop2},
$\mathbb{T}_{\xi}\mathbb{R}^n \in \mathbf{B}^{\perp}$, and
tangent-spaces along $\xi$ are all spanned by orthogonal
basis-vectors. The N-OSC tangent-bundle (see \cref{sec1b}) will be
orthogonal, $\mathbb{T}_{\gamma}\mathbb{R}^n \in
\mathbf{B}^{\perp}$, if, and only if, the Jacobian is
angle-preserving (conformal) at all points of $\xi$. This implies
that $dh(y_s(\tau)) \in \text{CO}(n)$ must hold for all $\tau$
where we let $\text{CO}(n)$ be the set of all $n\times n$
conformal matrices. By definition, $h\bigl|_{\xi}$ is then
conformal.
\end{proof}

Let $\mathbf{H}$ be the set of all conjugation maps. The subset
$\mathbf{H}_{\text{C}} \subset \mathbf{H}$, and the conjugation
operator $\overset{c}{\sim}$, are then defined as

\begin{equation}
\begin{aligned}
\mathbf{H}_{\text{C}} &= \{ h \in \mathbf{H} : h\bigl|_{\xi}
\text{is
conformal} \} \\
\overset{c}{\sim}_h & := \, \sim_h \,\, \wedge \,\, h \in
\mathbf{H}_{\text{C}}
\end{aligned}
\label{sec2c:eq1}
\end{equation}

and $\overset{c}{\sim}$ is simply $\sim$ with the extra condition
that the conjugation map, $h$, belongs to $\mathbf{H}_{\text{C}}$
(conformal restriction). It can be shown that $\overset{c}{\sim}$
is an equivalence operator\footnote{simply apply
\cref{sec2a:foot1} plus the fact the composition preserves the
conformal property meaning that $s \circ p$ is conformal if, and
only if, both $s$ and $p$ are conformal. \label{sec2c:foot1}}. We
can then re-state \cref{sec2c:prop1}

\begin{proposition}[\cref{sec2c:prop1} re-stated]
\label{sec2c:prop2} Let $\mathbf{o} \in \mathbf{O}^{\perp}$ be the
PNF-OSC on $\mathbb{U}$. Then $\mathbf{q} \in \mathbf{O}^{\perp}
\Leftrightarrow \mathbf{o} \overset{c}{\sim} \mathbf{q}$.
\end{proposition}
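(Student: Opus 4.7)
The plan is to obtain \cref{sec2c:prop2} as an almost immediate repackaging of \cref{sec2c:prop1}, using nothing more than the definition of the new operator $\overset{c}{\sim}$ introduced in \cref{sec2c:eq1}. Recall that $\mathbf{o}\overset{c}{\sim}_h \mathbf{q}$ is by construction shorthand for the conjunction $\mathbf{o}\sim_h \mathbf{q}\,\wedge\, h\in \mathbf{H}_{\text{C}}$, and by the definition of $\mathbf{H}_{\text{C}}$ the membership $h\in\mathbf{H}_{\text{C}}$ is exactly the condition that $h\bigl|_{\xi}$ is conformal. The logical task is therefore to split the biconditional into its two implications and apply the earlier proposition.

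For the direction ($\Leftarrow$) I would assume $\mathbf{o}\overset{c}{\sim}_h\mathbf{q}$ for some $h$. Unfolding the definition gives $\mathbf{o}\sim_h\mathbf{q}$ together with conformality of $h\bigl|_{\xi}$, which is precisely the right-hand condition of \cref{sec2c:prop1}; its ($\Leftarrow$) direction then yields $\mathbf{q}\in\mathbf{O}^{\perp}$. For the direction ($\Rightarrow$) I would start from $\mathbf{q}\in\mathbf{O}^{\perp}$ and first invoke the standing framework of \cref{sec2}: since both the PNF-OSC $\mathbf{o}$ and the N-OSC $\mathbf{q}$ are hyperbolic, stable oscillators sharing the same stable-manifold topology $\mathbb{U}=\mathbb{W}_s=\mathbb{R}^n\setminus\{0\}$, a smooth conjugating map $h$ with $\mathbf{o}\sim_h\mathbf{q}$ is guaranteed to exist. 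Applying \cref{sec2c:prop1} in the ($\Rightarrow$) direction to this $h$ then forces $h\bigl|_{\xi}$ to be conformal, i.e.\ $h\in\mathbf{H}_{\text{C}}$, which together with $\mathbf{o}\sim_h\mathbf{q}$ is exactly $\mathbf{o}\overset{c}{\sim}_h\mathbf{q}$.

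The main obstacle is conceptual rather than technical: one must be satisfied that this really is a pure renaming and that no new analytic content is being smuggled in. The only subtle point is the tacit appeal to existence of a conjugating $h$ on the ($\Rightarrow$) side, but this is supplied by the parametrization setup introduced in \cref{sec2} (two hyperbolic oscillators with identical stable-manifold topology are always topologically conjugate), so no additional machinery is needed. Once both implications are written out the equivalence follows directly, and the payoff of the restatement is purely organizational, since it packages the conformality condition inside the equivalence relation $\overset{c}{\sim}$ and thereby sets up the compact chain $\mathbf{q}\in\mathbf{O}^{\perp}\Leftrightarrow \mathbf{o}\overset{c}{\sim}\mathbf{q}$ exploited in \cref{sec3}.
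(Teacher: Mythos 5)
Your proposal is correct and matches the paper's own argument, which simply states that the result ``follows directly from \cref{sec2c:prop1} and the definitions in \cref{sec2c:eq1}''; you have merely written out explicitly the two directions of the biconditional that the paper leaves implicit. Your remark about the tacit existence of a conjugating map $h$ in the ($\Rightarrow$) direction is a fair observation, but it is indeed supplied by the standing parametrization framework of \cref{sec2}, so no new content is needed.
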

\begin{proof}
Follows directly from \cref{sec2c:prop1} and the definitions in
\cref{sec2c:eq1}.
\end{proof}

Let us briefly explain this result. Assuming the PNF-OSC,
$\mathbf{o} = (\psi_{\tau},\xi)$, on $\mathbb{U}$, is used as a
parametrization template. \Cref{sec2c:prop2} then says that,
in-order for $\mathbf{q} = (\phi_{\tau},\gamma)$ on $\mathbb{W}_s$
to be an OFD oscillator, there must exist a conjugation map $h$,
with a conformal restriction on $\xi$ (\emph{i.e.} $h \in
\mathbf{H}_{\text{C}}$) such that $\mathbf{o} \overset{c}{\sim}_h
\mathbf{q}$ (and \underline{not} just the standard $\mathbf{o}
\sim_h \mathbf{q}$).

\section{Main result : OFD and PSS symmetry correlation}

\label{sec3}

The conjugation relation \cref{sec2a:eq2} is valid at every point
of domains $\mathbb{U}$ and $\mathbb{W}_s$, respectively. This, by
definition, then implies that it also holds on all open sets
contained in these spaces\cite{wiggins1990,kuznetsov2013}. We
consider the following open neighborhoods $\Gamma_{\xi} = \{y \in
\mathbb{U}, \epsilon \in \mathbb{R}^+ : \Vert y - \xi \Vert \leq
\epsilon\} \subset \mathbb{U}$ and $\Gamma_{\gamma} = \{x \in
\mathbb{W}_s, \varepsilon \in \mathbb{R}^+ : \Vert x - \gamma
\Vert \leq \varepsilon\} \subset \mathbb{W}_s$ which describe
tubular open sets enclosing the limit-cycles $\xi,\gamma$.

\begin{proposition}

Consider the (local) conjugation map $h : \Gamma_{\xi} \to
\Gamma_{\gamma}$. If $h \in \mathbf{H}_C$, with $\mathbf{H}_C
\subset \mathbf{H}$ defined in \cref{sec2c:eq1}, then this map
must have the following representation on $\Gamma_{\xi}$

\begin{equation}
h(y) = \nu(y) + H(y) \label{sec3:eq1}
\end{equation}

where $\nu : \Gamma_{\xi} \to \Gamma_{\gamma}$ is a conformal map
which restricts to $h$ on $\xi$, $h(\xi) \equiv  h\bigl|_{\xi} =
\nu\bigl|_{\xi} \equiv \nu(\xi)$ and where $H : \Gamma_{\xi} \to
\Gamma_{\gamma}$ is some unspecified map with $H(\xi) = 0$.

\label{sec3:prop1}

\end{proposition}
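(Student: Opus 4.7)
The decomposition is almost tautological once we produce a conformal map $\nu : \Gamma_{\xi} \to \Gamma_{\gamma}$ that agrees with $h$ on the limit cycle $\xi$: setting $H := h - \nu$ automatically yields $H\bigl|_{\xi} = h\bigl|_{\xi} - \nu\bigl|_{\xi} = 0$, so the stated representation $h(y) = \nu(y) + H(y)$ falls out immediately. Hence the real content of the proposition is the \emph{existence} of such a conformal extension $\nu$ of the boundary data $h\bigl|_{\xi}$ from $\xi$ to all of $\Gamma_{\xi}$.

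The plan is to build $\nu$ using the rigidity of conformal maps. By hypothesis $h \in \mathbf{H}_C$, so from \cref{sec2c:prop1} we have $dh(y) \in \text{CO}(n)$ for every $y \in \xi$, meaning the Jacobian field along $\xi$ is precisely the Jacobian of some conformal map. For $n \geq 3$ I would invoke Liouville's theorem (the same ingredient the paper will call on in \cref{sec3:prop2}): any $C^1$ conformal map on a connected open subset of $\mathbb{R}^n$ is necessarily a Möbius transformation, i.e.\ a composition of translations, rotations, dilations and at most one inversion. The pointwise conformal data $\{h(y), dh(y)\}_{y \in \xi}$ then singles out a Möbius transformation, which I take as $\nu$; in the planar case $n=2$, the analogous extension comes from holomorphic (or antiholomorphic) continuation of the restriction of $h$ to $\xi$.

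The main obstacle is to check that this extension is globally well defined on the tubular neighborhood $\Gamma_{\xi}$ rather than only within a chart. Because $\xi$ is a closed loop, patching the local Möbius extensions over an open cover of $\xi$ involves a monodromy-type consistency condition around the loop; this succeeds because in dimension $n \geq 3$ the Möbius structure rigidly determines the extension from any infinitesimal neighborhood, and in $n = 2$ the analogous rigidity is provided by analyticity. After shrinking the tube $\Gamma_{\xi}$ if necessary so that $\nu$ is well defined and maps into $\Gamma_{\gamma}$, the representation $h = \nu + H$ with $\nu$ conformal on $\Gamma_{\xi}$ and $H\bigl|_{\xi} = 0$ is obtained, completing the proof of \cref{sec3:prop1}.
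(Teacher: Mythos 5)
You are right that the decomposition $h=\nu+H$ is essentially tautological once a conformal $\nu:\Gamma_{\xi}\to\Gamma_{\gamma}$ with $\nu|_{\xi}=h|_{\xi}$ is in hand, and you correctly locate the entire burden of \cref{sec3:prop1} in the \emph{existence} of that $\nu$. The gap is in how you produce it. Liouville's theorem classifies maps that are conformal on an \emph{open} subset of $\mathbb{R}^n$; the hypothesis $h\in\mathbf{H}_C$ only gives $dh(y)\in\text{CO}(n)$ for $y$ on the one-dimensional curve $\xi$, a set with empty interior, so Liouville says nothing about $h$ itself. Your substitute step --- that the pointwise data $\{h(y),dh(y)\}_{y\in\xi}$ ``singles out'' a M{\"o}bius transformation --- assumes exactly what must be shown: that this heavily overdetermined matching problem (a M{\"o}bius map has finitely many parameters, while the data is prescribed along an entire loop) is consistent. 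At a single point the first-order conformal data still leaves an $n$-parameter family of M{\"o}bius candidates; requiring agreement along all of $\xi$ could a priori be unsatisfiable, and nothing in your argument rules that out. The same defect recurs at $n=2$: a generic smooth trace $h|_{\xi}$ on a circle is \emph{not} the boundary value of any holomorphic function on an annulus (its Fourier coefficients need not decay geometrically), so ``holomorphic continuation of $h|_{\xi}$'' is not an operation you can invoke without further justification. Your monodromy/patching discussion addresses a secondary concern, not this primary one.

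For comparison, the paper's route does not posit agreement with $h$ on $\xi$ up front: it constructs $\nu$ by continuing the power-series data of $h$ off $\xi$ in whatever way keeps $\nu$ conformal on the tube, sets $H=h-\nu$, and then argues that since $dh$ and $d\nu$ both lie in $\text{CO}(n)$ along $\xi$ while the residual $H$ carries no conformal part, $dH$ must vanish on $\xi$; hence $H$ is constant along $\xi$, and the constant is absorbed into the translation normalization of \cref{sec1a:foot1}, giving $H(\xi)=0$. There $\nu|_{\xi}=h|_{\xi}$ is a \emph{conclusion}, not an input. To repair your version you would need an actual argument that the conformal-on-$\xi$ jet of a conjugation map is realized by a map that is conformal on a neighborhood --- i.e., you must use more than the pointwise membership $dh|_{\xi}\in\text{CO}(n)$, presumably exploiting the conjugacy identity $h\circ\psi_{\tau}=\phi_{\tau}\circ h$, which your proof (like the statement you are trying to prove) never invokes.
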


\begin{proof}

For every conformal restriction, $h\bigl|_{\xi}$, at-least one
conformal map, $\nu : \Gamma_{\xi} \to \Gamma_{\gamma}$, must
exist; \emph{i.e.} simply continue the power-series expansion of
the restriction $h\bigl|_{\xi}$, around $\xi$, in every possible
way that keeps $\nu$ conformal.  Let $H = h - \nu$ denote the
residual. By definition this residual is non-conformal on $\xi$
since all conformal contributions are contained in $\nu$. It then
follows directly\footnote{ Here $H\bigl|_{\xi} = H(\xi)$ is
non-conformal and $dH\bigl|_{\xi} \notin \text{CO}(n)$. Then $h =
\nu + H \in \mathbf{H}_C$ if, and only if, $dH$ is zero everywhere
on $\xi$. But this implies that $H$ must be constant on $\xi$,
meaning $H(\xi) = c$ for some scalar $c \in \mathbb{R}^n$. Here
the map, $H(\xi) = c$, represent a constant translation of the
full conjugation map, $h(y) = \nu(y) + H(y)$, which is irrelevant
as the dynamics invariant under constant translations (see
\cref{sec1} and \cref{sec1a:foot1}). Hence $c$ can be any value
w/o changing the outcome of the analysis and we choose $c=0$ which
keeps the singular/fixed-point of the oscillator at the
origin.\label{sec3:foot2}} that  $H(\xi) = 0$.
\end{proof}

The following statement discuss the possible forms the conformal
map $\nu$, in \cref{sec3:prop1}, can take

\begin{proposition}
\label{sec3:prop2}

Any conformal \emph{conjugation map} $\nu : \Gamma_{\xi} \to
\Gamma_{\gamma}$, must have the form

\begin{equation}
\nu(y) = \rho A y \label{sec3:eq1}
\end{equation}

where $\rho \in \mathbb{R} \setminus \{0\}$ is a non-zero real
value and $A \in O(n)$ is an orthogonal matrix.

\end{proposition}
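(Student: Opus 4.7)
The plan is to invoke classical rigidity theorems for conformal maps to first reduce $\nu$ to a Möbius-type transformation, and then to use the ambient oscillator structure (a single hyperbolic fixed point at the origin) to eliminate all but the scalar-times-orthogonal component. In dimensions $n \geq 3$, Liouville's rigidity theorem (in the sharp form due to Gehring and Iwaniec, cf.\ the citations) states that any sufficiently regular conformal map from an open subset of $\mathbb{R}^n$ into $\mathbb{R}^n$ is the restriction of a Möbius transformation of the one-point compactification $\mathbb{R}^n \cup \{\infty\}$. Applied to $\nu : \Gamma_{\xi} \to \Gamma_{\gamma}$, this forces the representation
\[
\nu(y) \;=\; b + \alpha \frac{A(y-a)}{|y-a|^{2\epsilon}}, \qquad \epsilon \in \{0,1\}, \; A \in O(n), \; \alpha \in \mathbb{R}\setminus\{0\}, \; a,b \in \mathbb{R}^n.
\]
For $n = 2$, conformality is much weaker (every biholomorphism is conformal), but the tubular neighborhoods $\Gamma_{\xi}$ and $\Gamma_{\gamma}$ are doubly connected annular regions (since $\xi$ is the unit circle in the $y_1 y_2$-plane), and Schottky's classification of conformal maps between annuli — together with its anti-holomorphic counterpart — forces $\nu$ into the same two-parameter family of scalings, rotations, and inversions.

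With this general form in hand, I would next exploit the fact that $\nu$ is a \emph{conjugation} map and not merely a conformal one. Conjugations preserve the singular/fixed-point structure of the vector field, and the PNF-OSC has a unique DC-point at the origin of $\mathbb{U}$ while the N-OSC has its unique DC-point at the origin of $\mathbb{W}_s$ (see \cref{sec1a:foot1}). The Möbius expression above extends canonically to $\mathbb{R}^n \cup \{\infty\}$, and the induced global conjugation on $\mathbb{U} = \mathbb{W}_s = \mathbb{R}^n \setminus \{0\}$ must map the origin to the origin. This single constraint forces $b = 0$ (no translation) and $a = 0$ (any inversion centre must coincide with the only allowed singularity of the global map, namely the origin).

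Finally, I would rule out the inversion case $\epsilon = 1$ by a growth argument: with $a = 0$ and $\epsilon = 1$, the map $y \mapsto \alpha A y / |y|^2$ sends $y \to 0$ to $\infty$, so the fixed point of $\psi_\tau$ is not mapped to the fixed point of $\phi_\tau$, contradicting the conjugation property. Hence $\epsilon = 0$, leaving $\nu(y) = \alpha A y$; renaming $\alpha \mapsto \rho$ gives the claim.

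The main obstacle I anticipate is the $n=2$ case, where conformality alone is extremely permissive and one must genuinely rely on the annular topology of $\Gamma_{\xi}$ together with the global conjugation constraint to collapse the richer holomorphic family down to the rigid form $\rho A y$. The argument via Schottky's theorem is the correct route, but checking that no exotic holomorphic conjugation survives — and in particular that the constraint imposed by the origin being the only admissible singularity of the extended map is actually enough to kill the remaining degrees of freedom — is where I expect most of the technical work to lie.
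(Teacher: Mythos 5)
Your proposal is correct and follows essentially the same route as the paper's own proof: Liouville's rigidity theorem for $n>2$ and Schottky's theorem on annuli for $n=2$ to obtain the Möbius/power form, then exclusion of the inversion branch via the singularity (orbits cannot be mapped to infinity / the fixed point at the origin must be preserved), and finally the fixed-point-at-the-origin condition to eliminate the translation. The only difference is cosmetic ordering — you fix $a$ and $b$ before discarding the inversion, the paper discards the inversion first — so no further comparison is needed.
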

\begin{proof}
For $n>2$ this is a direct consequence of Liouville's theorem
(1850)
\cite{hartman1947,flanders1966,jacobowitz1991,kuhnel2007,iwaniec1998}.
This theorem states that all conformal maps on an open region of
$\mathbb{R}^n$, with $n>2$, must be M{\"o}bius transformations of
the form $\nu(y) = b + \rho A(y - a )/|y - a |^s$, with $a,b \in
\mathbb{R}^n$, $\rho\in \mathbb{R}$, $A \in O(n)$ and the integer
exponent $s$ is either $0$ or $2$. We first consider the case
$s=2$. This map will have a singularity at $y = a$. From the
discussion in \cref{sec2a}, the conjugation-map transform orbits
into orbits (see \cref{sec2a:eq3}). This singularity would hence
imply that orbits around $y=a$ would be mapped to orbits around
infinity; clearly not a possibility. Hence we must have $s=0$
which implies a map $\nu(y) = d + \rho A y$ where $d = b - \rho
Aa$ is a real parameter. Both oscillators are assumed to have
fixed-points at the origin (see \cref{sec1,sec2} and \cref{sec1a:foot1,sec3:foot2}) and we hence
must have $\nu(0) = 0$ which implies $d = 0$ and we reached
\cref{sec3:eq1}. For the special planar case, $n=2$, $\nu$ is a
bi-holomorphic map on the annuli $\Gamma_{\xi}$. Schottkys Theorem
(1877)\cite{astala2008} states that the map must have the form
$k(z) = az^{\pm 1}$ where $a\in \mathbb{C} \setminus \{0\}$.
Again, since singularities are not allowed and we must have $k(0)
= 0$ this implies the map $k(z) = az$. \Cref{sec3:eq1} is then
reached by transforming from complex to real coordinates in the
plane.
\end{proof}

At this point we define the set of rotational symmetric curves
centered at the origin and with radius $r$, on $\mathbb{W}_s$

\begin{equation}
\mathcal{SYM}_n = \bigl\{ x \in \mathbb{R}^n , \exists r \in
\mathbb{R}^+ \,\, : \,\, \lVert x \rVert - r = 0 \bigr\}
 \label{sec3:eq2}
\end{equation}

From \cref{sec2b:eq2}, the PNF-OSC limit-cycle belongs to this
set, $\xi \in \mathcal{SYM}_n$. This fact allow us to state the
main result of this paper

\begin{theorem}

\label{sec3:theo1}

Let $\mathbf{q} = (\phi_{\tau},\gamma) \in \mathbf{O}$ be any
given oscillator on $\mathbb{W}_s$. Then

\begin{equation}
\mathbf{q} \in \mathbf{O}^{\perp} \Rightarrow \gamma \in
\mathcal{SYM}_n
\end{equation}

\end{theorem}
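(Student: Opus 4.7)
The plan is to assemble the chain of implications that the authors have pre-staged in the five-step summary, plugging in the propositions that have already been established. First I would fix an arbitrary $\mathbf{q} = (\phi_\tau,\gamma) \in \mathbf{O}^{\perp}$ on $\mathbb{W}_s$ and invoke the conjugation framework of \cref{sec2,sec2a}: since the PNF-OSC $\mathbf{o} = (\psi_\tau,\xi)$ is a canonical hyperbolic oscillator on $\mathbb{U} = \mathbb{R}^n \setminus \{0\}$, a conjugation $\mathbf{o} \sim_h \mathbf{q}$ exists. By invariance of limit-cycles under conjugation (\cref{sec2a:eq5}), we have $h(\xi) = \gamma$. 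Since $\mathbf{o} \in \mathbf{O}^{\perp}$ by \cref{sec2b:prop2} and $\mathbf{q} \in \mathbf{O}^{\perp}$ by assumption, \cref{sec2c:prop2} forces $h \in \mathbf{H}_C$, i.e.\ $h\bigl|_{\xi}$ is conformal.

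Next I would restrict attention to a tubular neighborhood $\Gamma_{\xi}$ of $\xi$, so that \cref{sec3:prop1,sec3:prop2} become available. By \cref{sec3:prop1} we may decompose $h = \nu + H$ where $H(\xi) = 0$ and $\nu$ is a conformal map on $\Gamma_{\xi}$; hence $h\bigl|_{\xi} = \nu\bigl|_{\xi}$. Applying \cref{sec3:prop2} to $\nu$ pins its form down to $\nu(y) = \rho A y$ with $\rho \in \mathbb{R}\setminus\{0\}$ and $A \in O(n)$. Combining these facts we obtain the key identity
\begin{equation*}
\gamma = h(\xi) = \nu(\xi) = \rho A \xi .
\end{equation*}

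The final step is to read off rotational symmetry. From \cref{sec2b:eq2}, $\xi$ is the unit circle in the $y_1,y_2$-plane embedded in $\mathbb{R}^n$, so every $y \in \xi$ satisfies $\lVert y \rVert = 1$. Since $A \in O(n)$ preserves the Euclidean norm, every point $x = \rho A y \in \gamma$ satisfies $\lVert x \rVert = |\rho|\,\lVert y \rVert = |\rho|$, which is exactly the defining condition for $\gamma \in \mathcal{SYM}_n$ in \cref{sec3:eq2} with $r = |\rho|$. This closes the argument.

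The main obstacle, in terms of real content, is not any step in this final assembly but the two algebraic-geometric facts feeding into it: (i) the conformal restriction forced by \cref{sec2c:prop1,sec2c:prop2}, and (ii) the rigidity statement from \cref{sec3:prop2} (Liouville for $n>2$, Schottky for $n=2$) that pins a conformal self-map of an annular neighborhood of the origin to the form $\rho A y$. Since both are already established in the preceding propositions, the proof of \cref{sec3:theo1} reduces to the clean bookkeeping $h(\xi) = \nu(\xi) = \rho A \xi \subset \mathcal{SYM}_n$, and I would keep the write-up correspondingly short, emphasizing that the one-way nature of the implication is built in (no claim is made about the converse, consistent with \cref{sec0x:foot1}).
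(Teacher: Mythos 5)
Your proposal follows exactly the same chain as the paper's own proof: existence of the conjugation, $h(\xi)=\gamma$ from invariance, $h\in\mathbf{H}_C$ via \cref{sec2c:prop2}, the decomposition $h=\nu+H$ from \cref{sec3:prop1}, the rigidity $\nu(y)=\rho A y$ from \cref{sec3:prop2}, and the concluding symmetry-preservation step. The only (welcome) difference is that you make the last step fully explicit by computing $\lVert \rho A y\rVert = |\rho|$ for $y\in\xi$, where the paper simply asserts that rotation plus scaling preserves membership in $\mathcal{SYM}_n$.
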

\begin{proof}

From \cref{sec2c:prop2} we must have $\mathbf{o} \overset{c}{\sim}
\mathbf{q}$ where $\mathbf{o} = (\psi_{\tau},\xi)$ is the PNF-OSC
described in \cref{sec2b}. This implies that the conjugation map,
$h$, must belong to the set, $\mathbf{H}_\text{C}$, defined in
\cref{sec2c:eq1}. From \cref{sec3:prop1}, this implies a
conjugation map of the form $h(y) = \nu(y) + H(y)$ with $h(\xi) =
\nu(\xi)$. \Cref{sec3:prop2} and \cref{sec2a:eq5} then yield
$\gamma = h(\xi) = \nu(\xi) = \rho A y$ , with $y \in \xi$. This
describes a linear rotation + scaling of $\xi$ which implies $\xi
\in \mathcal{SYM}_n \Rightarrow \gamma = h(\xi) \in
\mathcal{SYM}_n$, as linear rotation + scaling preserves symmetry
of $\xi \in \mathcal{SYM}_n$ (\emph{i.e.} maps $\mathcal{SYM}_n$
into $\mathcal{SYM}_n$).
\end{proof}

Firstly, it is important to note that the result in
\cref{sec3:theo1} represents a one-way implication. In other
words, the OFD property \emph{implies} PSS symmetry. No reverse
relation exists as discussed, at length, in the introduction;
\emph{i.e.} symmetry does \underline{not} imply an OFD. Secondly,
the result, implicitly, also hold for symmetric limits-sets on
$\mathbb{W}_s$ with center away from the origin even-though
$\mathcal{SYM}_n$ in \cref{sec3:eq2} seem to include this
restriction\footnote{this is because a non-zero center for
$\gamma$, on $\mathbb{W}_s$, can always brought back to the origin
through a simple linear translation which leaves the dynamics
un-changed. This linear translation, as was explained in
\cref{sec1} (see also \cref{sec1a:foot1}), is assumed a-priori to
analysis. Hence the location of the limit-cycle center is
irrelevant; only symmetry is important.\label{sec3:foot1}}.
\Cref{sec3:theo1} explains why OFD's generally are not observed in
real-life oscillator systems where non-linear device-models make
it impossible to attain perfect PSS symmetry. The methodology
developed in the previous sections, leading to the main result in
\cref{sec3:theo1}, is given the SYM-OFD calling handle; signifying
the (one-way) relation between PSS symmetry and LR OFD described
in \cref{sec3:theo1}.

\subsection{Discussion of the result in \cref{sec3:theo1}}

\label{sec3a}

It may seem that the result in \cref{sec3:theo1} is of limited scope
since it relies on the specific choice of the PNF-OSC normal-form
model. Luckily, because $\overset{c}{\sim}$ is an equivalence
relation (see \cref{sec2c:eq1} and \cref{sec2c:foot1}), this turns
out not to be an issue. Assume we had chosen some other OFD NF-OSC
model, call it XNF-OSC, $\mathbf{x} =
(\theta_{\tau},\sigma) \in \mathbf{O}^{\perp}$ with stable manifold $\mathbb{X} = \mathbb{R}^n\backslash \{ 0\}$. Since the PNF-OSC
is an OFD oscillator, $\mathbf{o} \in \mathbf{O}^{\perp}$, it
follows from \cref{sec2c:prop2} that a map, $k : \mathbb{U} \to \mathbb{X} \in \mathbf{H}_C$, must exist
such that $\psi_{\tau}
\overset{c}{\sim}_k \theta_{\tau}$. Equivalently, if the N-OSC $\mathbf{q} =
(\phi_{\tau},\gamma)$ is an OFD oscillator, $\theta_{\tau}
\overset{c}{\sim}_f \phi_{\tau}$. Since $\overset{c}{\sim}$ is an
equivalence relation (see \cref{sec2c:eq1} and
\cref{sec2c:foot1,sec2a:foot1}) it follows from transitivity

\begin{equation}
\psi_{\tau} \overset{c}{\sim}_k  \theta_{\tau} \wedge
\theta_{\tau} \overset{c}{\sim}_f \phi_{\tau} \Rightarrow
\psi_{\tau} \overset{c}{\sim}_h \phi_{\tau} \label{sec3a:eq1}
\end{equation}

where $h$ is the composition $h = k \circ f \equiv k(f)$. The
statement in \cref{sec3:theo1} is hence independent of the choice
of normal-form (NF-OSC) and hence unique. Or said another way :
\cref{sec3a:eq1} shows that the XNF-OSC, $\mathbf{x}$, and the PNF-OSC, $\mathbf{o}$,
give the same OFD classification of the N-OSC, $\mathbf{q}$.
\par
The range of an equivalence relation, \emph{i.e.} the types of
oscillators which can be parameterized, is limited by the
topological invariants. Each set of invariant values defines a
unique \emph{equivalence class} which is preserved under
conjugation \cite{wiggins1990,kuznetsov2013}. The invariants do
not change under conjugation and it follows that a oscillators in
one equivalence class (defined in-terms of one sets of invariants)
cannot parameterize a system in a different class; as this would
correspond to a different set of invariants. For hyperbolic
oscillators, on identical domains $\mathbb{U} = \mathbb{W}_s$, the
only invariant of importance is the M-MATRIX stability measure
characterized in-terms of the NF-OSC M-MATRIX
eigenvalue-spectrum\footnote{ expanding the conjugation relation
\cref{sec2a:eq2}, around $\xi$, in a power-series, using
\cref{sec2a:eq5} and excluding higher-order terms gives the
expression $dh(u_{\tau}) \circ d\psi_{\tau}(u_0) =
d\phi_{\tau}\circ dh(u_0)$. Setting $\tau = 2\pi$ and using the
notation for the M-MATRIX introduced in \cref{sec1b} \emph{i.e.}
$\Psi = d\psi_{2\pi}$, $\Phi = d\phi_{2\pi}$, we get $\Phi =
dh(u_0) \circ \Psi \circ dh(u_0)^{-1}$ where $u_0 = u_{2\pi}$ for
$u_0,u_{2\pi}\in \xi$, was used. This is a \emph{similarity
relation} and the M-MATRIX eigenvalues are hence invariant under
conjugation.}, $\text{spec}\{\Psi\}$. From \cref{sec2b:prop2} and appendix \ref{app1}, the
PNF-OSC was specifically designed such that all possible M-MATRIX
eigenvalue-spectra can be modelled, \emph{i.e} any combination of
$\{\lambda_i\}_{i=1}^{n}$, inside the unit-circle on the complex
plane ($\vert \lambda_i \vert \leq 1$ for stable oscillators), and
this model is hence able to parameterize all oscillators on
$\mathbb{W}_s$.
\par
The results derived herein are thus fully independent of the
specific choice NF-OSC model. The PNF-OSC model was only chosen
in-order to facilitate the analysis leading to \cref{sec3:theo1}.
We could easily have chosen any other equivalent model and would
have arrived at the exact same result. The equivalence topics
discussed herein are well-known concepts developed within the
field of topological conjugation theory; a mature and established
branch of modern dynamical systems theory. The SYM-OFD model, and
specifically \cref{sec3:theo1}, is hence based on an extremely
rigorous foundation of theoretical research which stretches back
several decades \cite{wiggins1990,kuznetsov2013,frankel2011,
huguet2012,huguet2013,huguet2018,cabre2003,cabre2003_2,cabre2005,hartman1947,flanders1966,jacobowitz1991,kuhnel2007,iwaniec1998,astala2008}.

\section{Numerical experiments}

\label{sec5}

\begin{figure*}[t]
\centering
\begin{subfigure}[t]{.6\textwidth}
  \centering
  \raisebox{1.5cm}{\includegraphics[scale=1.0]{./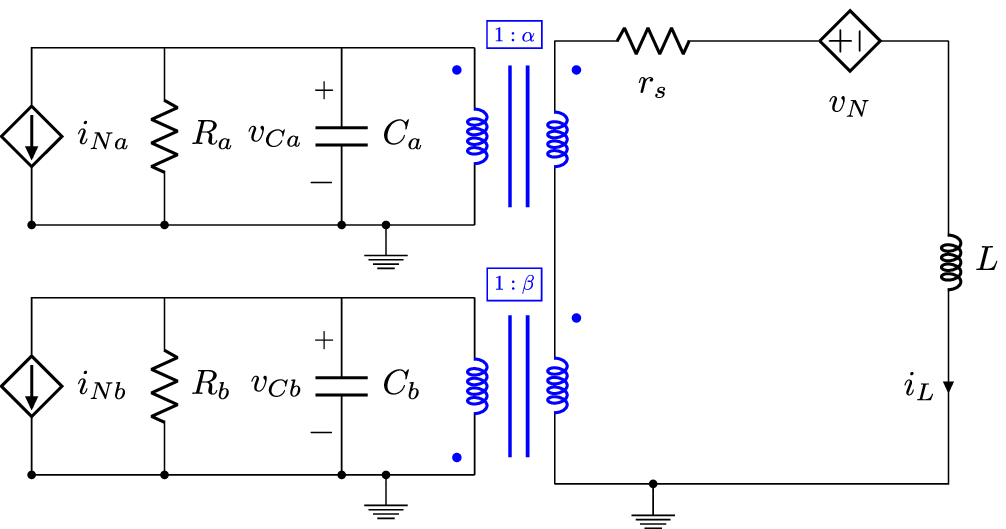}}
  \caption{TCR-OSC}
\end{subfigure}%
\hspace{0.25em}
\begin{subfigure}[t]{.35\textwidth}
  \centering
   \includegraphics[scale=0.6]{./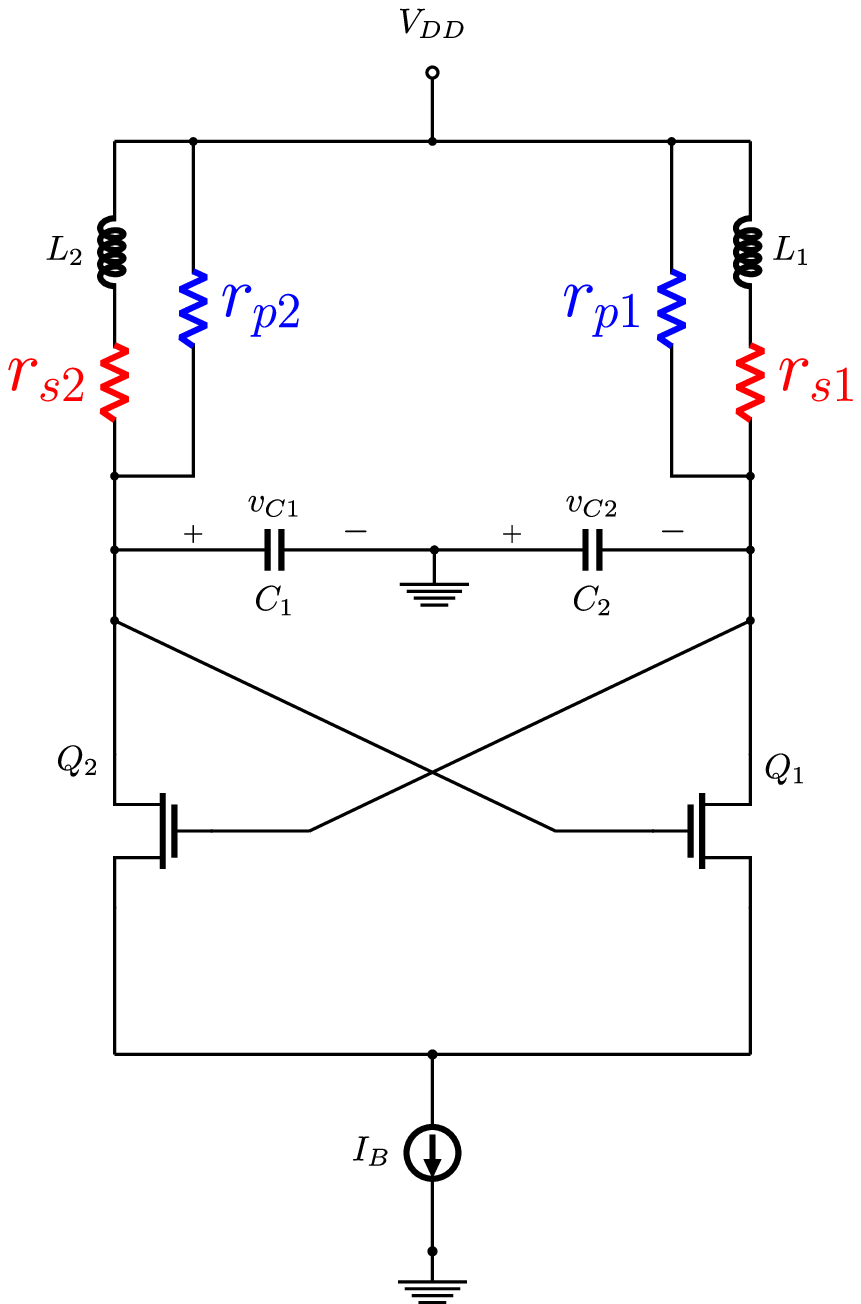}
    \caption{FET-OSC}
\end{subfigure}
\caption{(a) : transformer coupled resonator oscillator (TCR-OSC),
$(i_{Na},i_{Nb})$ are trans-conductors (nonlinear conductors) and
$v_N$ is a trans-impedance (nonlinear resistor). (b) MOSFET
oscillator (MOS-OSC) oscillator with both parallel (blue) and
series (red) resonator resistor options. } \label{sec5:fig1}
\end{figure*}

The transformers in the TCR-OSC circuit, shown in
\cref{sec5:fig1}.(a), are perfectly ideal with turn-ratios
$\alpha,\beta$ and with the lower transformer inducing a
$180^{\circ}$ phase shift. Time-normalization, $\tau = t\omega_0$,
is introduced with $\omega_0 = 1.0/\sqrt{LC_s}$ and $C_s =
C_{a}C_{b}/(C_a+C_b)$. The time-normalized equations of the
TCR-OSC circuit are then written

\begin{equation}
\begin{aligned}
\sigma_a \dot{y} &= -y/Q_a + \alpha_n v/z_0 + i_{Na}(y,v,w) \\
\dot{v} &= -v/Q_s + z_0(\alpha y - \beta w)  + v_{N}(y,v,w) \\
\sigma_b \dot{w} &= -w/Q_b + \beta_n v/z_0 + i_{Nb}(y,v,w)
\end{aligned}
\label{sec5:eq1}
\end{equation}

where $x = (y,v,w) = (v_{Ca}/z_0,i_Lz_0,v_{Cb}/z_0) \in
\mathbb{R}^3$ is the normalized $3$-D circuit state-vector with
$z_0 = \sqrt{L/C_s}$ and $\sigma_{a,b} = C_{a,b}/C_s$, $\alpha_n =
\alpha/(\alpha^2 + \beta^2)$, $\beta_n = \beta/(\alpha^2 +
\beta^2)$, $Q_{a,b} = \omega_0 C_a R_{a,b}$, $Q_s = \omega_0 L
/r_s$. The trans-conductors and trans-impedance controlled sources
(negative conductors/resistors) are defined in-terms of the
functions $i_{N(a\vert b)}(x) = (\alpha_n\vert \beta_n)z_0[\alpha
y - \beta w]f(x)$ and $v_N(x) = (v/z_0)f(x)$ with $f :
\mathbb{R}^3 \to \mathbb{R}$ being the real function $f(x) =
\bigl( p - \bigl(q_1[\alpha z_0 y - \beta z_0 w]^2 + q_2(v/z_0)^2
\bigr)$. These types of polynomial functions, discussed here, are
readily implemented electrically using \emph{e.g.} standard
operational amplifiers (op-amp) circuits, operational
trans-conductance amplifiers (OTA) or dedicated chips. We will not
dwell on this practical implementation issue here as this topic is
not critical for the discussion below. Unless stated otherwise the
circuit parameters are fixed as $\alpha=\beta = 1/\sqrt{2}$, $C_a
= C_b \Rightarrow \sigma_a = \sigma_b = 2$, $z_0 = 1 \Omega$,
$Q_a=Q_b=Q_s = 100$, $p=0.06\mathrm{A/V}$ , $q_1 = 0.003
\mathrm{A/V^{2}}$, $q_2 = 0.003 \mathrm{A^{-1}}$, $s =
1\mathrm{V/A}$.
\par
The FET-OSC circuit in \cref{sec5:fig1}.(b) is a differential pair
FET LC oscillator configuration. The circuit includes both an
option for parallel (blue) and series (red) resonator resistance
topology. In the simulations below the FET differential-pair
circuit is modelled in-terms of the trans-conductance function
$\Theta(\zeta) = I_B (2/\pi)\arctan( [\pi/2](G_m/I_B)\zeta)$ which
has been demonstrated to be a excellent approximation of the
actual device model\cite{pepe16}. Here $\zeta$ represents the
input voltage across the pair and $G_m = \sqrt{k_nI_B}$ is the
small-signal trans-conductance where $k_n = \mu_n C_{\text{ox}}
W_n/L_n$ with $\mu_n,C_{\text{ox}}, W_n$ and $L_n$ are the charge
mobility, oxide capacitance, gate width and length, respectively.
The normalized time-variable is given as, $\tau = t\omega_0$,
where $\omega_0 = 1.0/\sqrt{L_pC_s}$, $C_s = C_{1}C_{2}/(C_1+C_2)$
and $L_p = L_{1}L_{2}/(L_1+L_2)$. The time-normalized dynamic
equations, of the circuit in \cref{sec5:fig1}.(b),
are then written

\begin{equation}
\begin{aligned}
\sigma_1\dot{q} &= -\textcolor{blue}{q/Q_{p1}} - r/z_0 + \Theta(z_0[q-s]) \\
\kappa_1\dot{r} &=  z_0 q - \textcolor{red}{r/Q_{s1}}     \\
\sigma_2 \dot{s} &= -\textcolor{blue}{s/Q_{p2}} - u/z_0  -\Theta(z_0[q-s]) \\
\kappa_2 \dot{u} &= z_0s - \textcolor{red}{u/Q_{s2}}
\end{aligned}
\label{sec5:eq2}
\end{equation}

where $x = (q,r,s,u) = (v_{C1}/z_0,i_{L1}z_0,v_{C2}/z_0,i_{L2}z_0)
\in \mathbb{R}^4$ is the (normalized) $4$-D state vector with $z_0
= \sqrt{L_p/C_s}$. The blue/red contributions in \cref{sec5:eq2}
correspond to the blue/red parts of the FET-OSC circuit in
\cref{sec5:fig1}.(b), $Q_{p1,2} = \omega_0 C_{1,2}r_{p1,2}$,
$Q_{s1,2} = \omega_0 L_{1,2}/r_{s1,2}$, $\sigma_{1,2} =
C_{1,2}/C_s$ and $\kappa_{1,2} = L_{1,2}/L_p$. Unless otherwise
stated the parameters are fixed as $Q_{p1,2} = Q_{s1,2} = 100$,
$C_1 = C_2 \Rightarrow \sigma_{1,2} = 2.0$, $L_1 = L_2 \Rightarrow
\kappa_{1,2} = 2.0$, $I_B = 8 \mathrm{mA}$ and $k_n =
2\mathrm{AV^{-2}}$ (time-scaled value).

\subsection{The simulation measures}

\label{sec5a}

The N-OSC limit-cycle on $\mathbb{W}_s$ is written componentwise
as $\gamma(\tau) = (\gamma_1(\tau) , \gamma_2(\tau) , \cdots ,
\gamma_n(\tau))$ where $\gamma(\tau) = \gamma(\tau + 2\pi)$ (due
to time-normalization, see \cref{sec1}). Given the $2\pi$ periodic
function $\rho(\tau) = \sum_{i=1}^n \gamma_i^2(\tau)$ the
following two measures are introduced

\begin{equation}
\begin{aligned}
\Lambda &=  \Biggl| \frac{1}{2\pi} \int_{0}^{2\pi}
\biggl\{(\rho(\tau)/\rho_{\max}) - 1.0 \biggr\}d\tau \biggr| \\
\Upsilon &= \underset{i,j \in [1;n]}{\max} \bigl\{ \Theta_{ij}
\bigr\} \, ,\, \text{with} \\
\Theta_{ij} &= \underset{\tau \in
[0;2\pi]}{\max} \bigl| |\angle \{ u_{i}(\tau) , u_{j}(\tau) \}| -
\pi/2 \bigr| \label{sec5a:eq1}
\end{aligned}
\end{equation}

where $\rho_{\max}$ represents the maximal value of the function
$\rho$ on the interval $\tau \in [0;2\pi)$ and $u_{i}(\tau)$ is
the $i$th Floquet mode. From inspection\footnote{ from
\cref{sec3:eq2}, $\rho$ is a constant/scalar function for $\gamma
\in \mathcal{SYM}_n$. If $\rho$ is constant then \cref{sec5a:eq1}
gives $\Lambda = 0$ which implies $\gamma \in \mathcal{SYM}_n
\Rightarrow \Lambda = 0$. Likewise, since the integrand is
strictly negative or zero the only way $\Lambda$ can be zero is if
$\rho$ is a constant scalar function which directly implies
$\gamma \in \mathcal{SYM}_n$. Hence, $\Lambda = 0 \Rightarrow
\gamma \in \mathcal{SYM}_n$ and we have derived $\gamma \in
\mathcal{SYM}_n \Leftrightarrow \Lambda = 0$. $\Upsilon$ directly
measures the maximum deviation, away from $90^{\circ}$, of the various angles between modes and
hence the closeness of the solution to the OFD oscillator class $\mathbf{O}^{\perp}$ (see \cref{sec1b:eq5}).
\label{sec5a:foot1}} it should be clear that $\Lambda$ in measures
the closeness of the limit-cycle $\gamma$ on $\mathbb{W}_s$ to
members of the set $\mathcal{SYM}_n$ in \cref{sec3:eq2} whereas
$\Upsilon$ measures the closeness of a N-OSC solution, $\mathbf{q}
\in \mathbf{O}$, to a member of the special OFD oscillator class
$\mathbf{O}^{\perp}$ defined in \cref{sec1b:eq5}.

\begin{remark}
\label{sec5a:remark1} From \cref{sec5a:eq1} and
\cref{sec5a:foot1}, in-order for the result in \cref{sec3:theo1}
to be valid the relations $\Upsilon \rightarrow 0 \Rightarrow
\Lambda \rightarrow 0$ must be observed in all numerical
experiments. This is of-course equivalent to $\Upsilon \rightarrow
-\infty \Rightarrow \Lambda \rightarrow -\infty$ on the dB scale;
i.e. $10\log_{10}(\Lambda),10\log_{10}(\Upsilon)$.
\end{remark}

The predictions made in \cref{sec5a:remark1} are now tested in
\cref{sec5a:fig1} on a simple van-der-Pol oscillator. From the
insets in this figure, the PSS is seen to approach an elliptical
limit-set as $\epsilon \to 0$ for all values of $c_0$ except for
the value $c_0 = 1$. The exception, $c_0=1.0$, approaches a
rotational symmetric set $\gamma \in \mathcal{SYM}_2$ (see
\cref{sec3:eq2}). From \cref{sec5a:remark1}, we need to see that
$\Upsilon \rightarrow 0$, implies, $c_0 = 1$; exactly because
$\Lambda \rightarrow 0$, as $\epsilon \to 0$, only for $c_0 = 1$.
Inspecting \cref{sec5a:fig1} this is indeed the case.

\begin{figure}[t]
\begin{center}
\includegraphics[scale=1.0]{./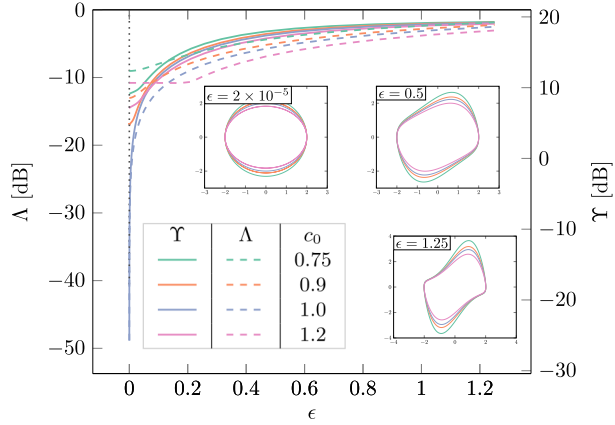}
\end{center}
\caption{ Simulations for the simple van-der-Pol oscillator,
$\dot{x} = c_0 y $, $\dot{y} = -x + \epsilon ( 1.0 - x^2 )y$. The
two measures $\Lambda$ and $\Upsilon$, defined in
\eqref{sec5a:eq1}, are plotted in the logarithmic dB scale
$10\log_{10}(\Lambda),10\log_{10}(\Upsilon)$) and insets of the
oscillator limit-cycles, at various parameter points (color-codes
match those in $\Upsilon,\Gamma$-tables), are included. }
\label{sec5a:fig1}
\end{figure}

\subsection{Simulation results}

\label{sec5b}

\begin{figure*}[t]
\begin{center}
\includegraphics[scale=0.75]{./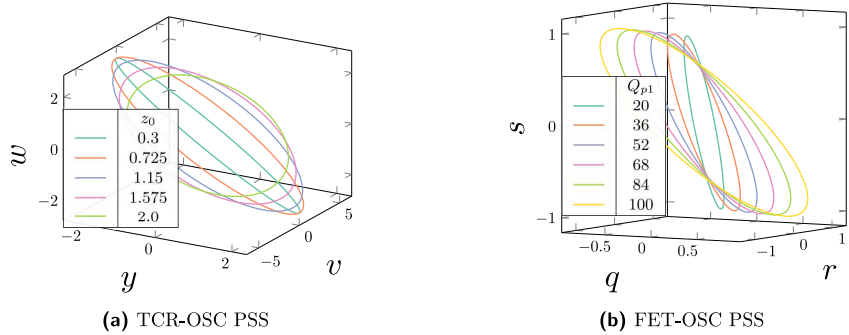}
\end{center}
\caption{ \textbf{(a)} : TCR-OSC PSS (see \cref{sec5:eq1},
\cref{sec5:fig1}.(a)) $z_0 : 0.3\to 2.0$. \textbf{(b)} : FET-OSC
PSS (see \cref{sec5:eq2}, \cref{sec5:fig1}.(b)) parallel resonator
(\textcolor{blue}{blue}), $Q_{p1} : 20\to 100$. }
\label{sec5b:fig1}
\end{figure*}

\begin{figure*}[t]
\begin{subfigure}[t]{1.0\textwidth}
\centering
\includegraphics[scale=0.75]{./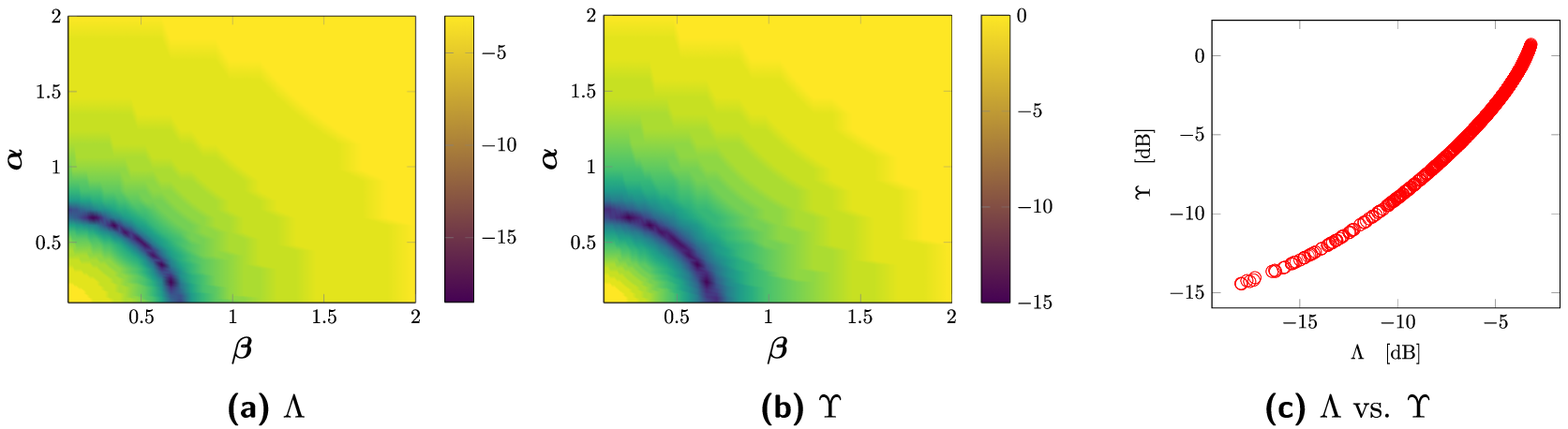}
\end{subfigure}\\
\begin{subfigure}[t]{1.0\textwidth}
\centering
\includegraphics[scale=0.75]{./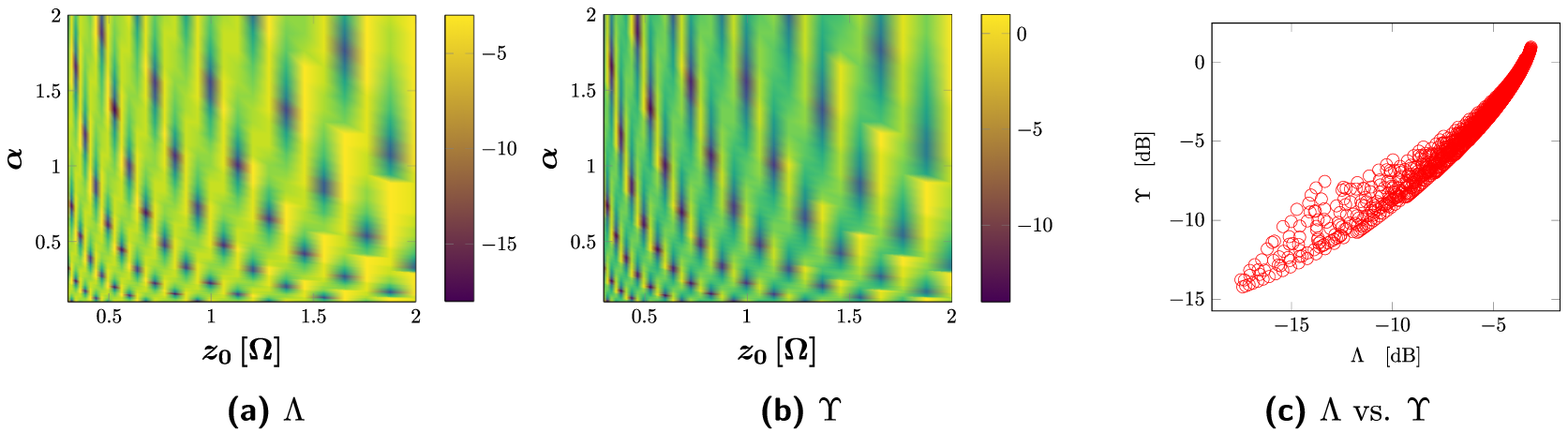}
\end{subfigure}
\caption{ Simulations for the TCR-OSC circuit in
\cref{sec5:eq1,sec5:fig1}.(a). \textbf{(a)-(b)} : 2D sweep of
measures $\Upsilon$ and $\Lambda$ in dB (\emph{i.e.}
$10\log_{10}(\Lambda), 10\log_{10}(\Upsilon)$) defined in
\cref{sec5a:eq1}, \textbf{(c)} : a scatter plot of the
points in \textbf{(a)} ($10\log_{10}(\Lambda)$) vs. the points in \textbf{(b)}
($10\log_{10}(\Upsilon)$). \textbf{Upper row : } sweep of parameters $\alpha,\beta \in
[0.1;2.0]\times [0.1;2.0]$. \textbf{Lower row :} sweep of parameters $\alpha,z_0 \in [0.1;2.0]\times [0.3;2.0]$.}
\label{sec5b:fig2}
\end{figure*}

\Cref{sec5b:fig1} plots the PSS orbits (limit-cycles)
corresponding to the oscillator circuits in \cref{sec5:fig1}. The
figures illustrate how the PSS symmetry can be
increased/descreased by varying the circuit parameters defined in
connection with \cref{sec5:eq1,sec5:eq2}. Once the PSS has been
calculated, the measure, $\Lambda$, is readily calculated from
\cref{sec5a:eq1}. Based on this PSS, the Floquet modes
$\{u_i(\tau),v_i(\tau),\mu_i\}$ are derived by integrating the LR
equations corresponding to this
solution\cite{kartner1990,demir2000,djurhuus2009} (see also
\cref{sec1b}) and the measure, $\Upsilon$, follows from
\cref{sec5a:eq1}.
\par
Considering the TRC-OSC circuit in \cref{sec5:fig1}.(a), the two
measures are calculated, as explained above, while sweeping
circuit parameters $(z_0,\alpha,\beta)$. \Cref{sec5b:fig2} (upper
row) depicts contour plots for the two measures $\Lambda,\Upsilon$
as functions of turn-ratios $\alpha,\beta$ swept in the interval
$[0.1;2.0]\times [0.1;2.0]$. Inspecting these two images one
immediate observation is that the pattern of the contours appear
very similar. This suspicion is then directly confirmed in
\cref{sec5b:fig2}.(c) which plots the points in these two figures
against each other as a scatter-plot format. It is strikingly
clear from this plot that the measures are very strongly
correlated. The figure furthermore reveals the relationship
$\Upsilon \to 0 \Rightarrow \Lambda \to 0$ ($-\infty$ on dB scale)
and from \cref{sec5a:remark1} this observation therefore directly
serves to validate the novel SYM-OFD model derived herein. In the
lower row of \cref{sec5b:fig2} these same simulations are repeated
but this time $\alpha,z_0$ are swept in the range $[0.1;2.0]\times
[0.3;5.0]$. The figure again shows the measures obeying the
predictions laid out in \cref{sec5a:remark1} and hence again serve
to validate the model. Finally, \cref{sec5b:fig3} shows two
equivalent simulation sweeps for the FET-OSC circuit in
\cref{sec5:fig1}.(b) with both figures serving to validate the
novel SYM-OFD model developed herein.

\begin{figure*}[t]
\begin{subfigure}[t]{1.0\textwidth}
\centering
\includegraphics[scale=0.75]{./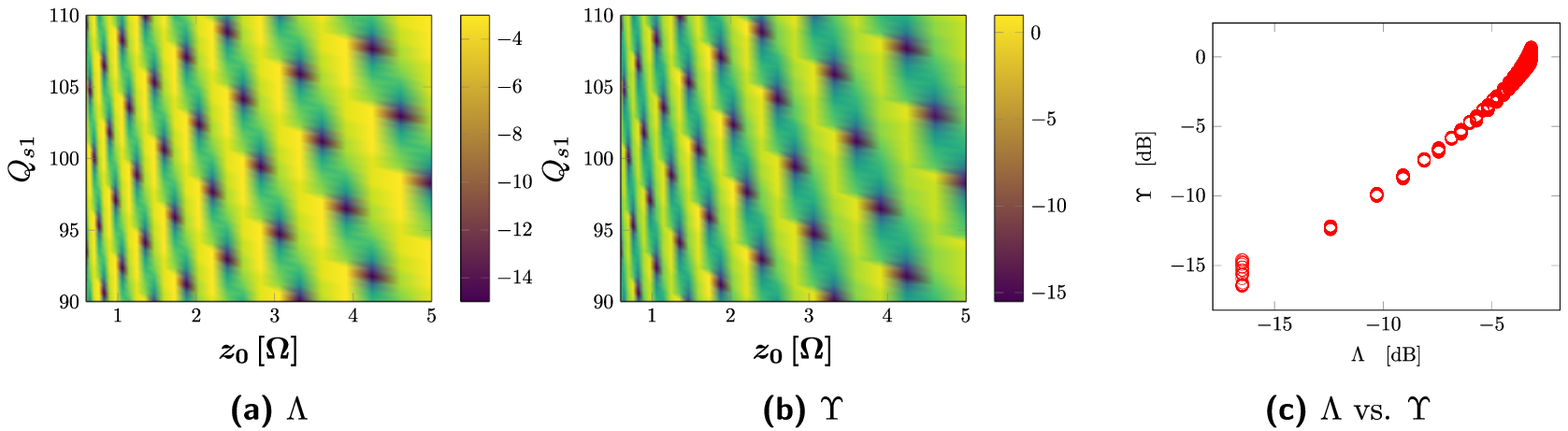}
\end{subfigure}\\
\begin{subfigure}[t]{1.0\textwidth}
\centering
\includegraphics[scale=0.75]{./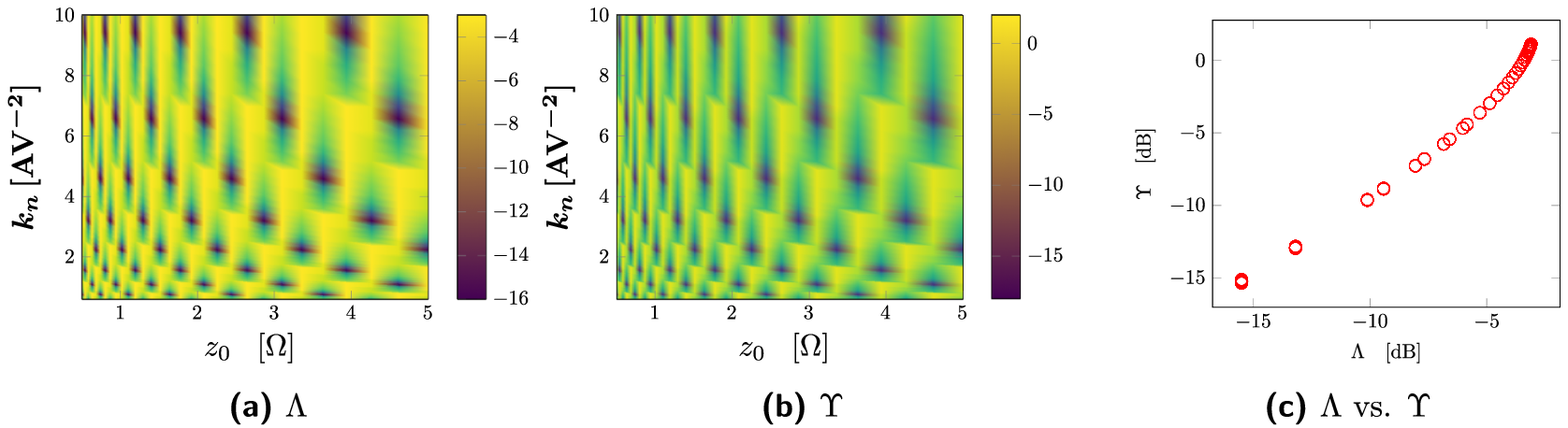}
\end{subfigure}
\caption{ Simulations for the FET-OSC circuit in
\cref{sec5:eq2,sec5:fig1}.(b). Figure content/symbols as in \cref{sec5b:fig2}.
\textbf{Upper row : } sweep of parameters $(Q_{s1},z_0) \in [90;110]\times [0.6;5.0]$. \textbf{Lower row :} sweep of parameters $z_0,k_n
\in [0.6;5]\times [0.5;10.0]$.}
\label{sec5b:fig3}
\end{figure*}

\subsection{Summary of Simulation Results}

The 13 simulations in
\cref{sec5a:fig1,sec5b:fig2,sec5b:fig3}, divided
into 3 figures, all together serve to validate the SYM-OFD model
developed herein. By varying the circuit parameters we can control
both the symmetry properties of the steady-state (see
\cref{sec5b:fig1}) and the Floquet decomposition of the LR (level
of orthogonality). Inspecting, \cref{sec5a:fig1} and the 2-D contour
plots + scatter plots in
\cref{sec5b:fig2,sec5b:fig3} it is clear that
these two properties are closely correlated, \emph{i.e.}
orthogonality increases $\Rightarrow$ symmetry increases; as was
first predicted in \cref{sec3:theo1}.
\par
It is important to note that this idea is completely novel. There
is absolutely nothing in the definition of symmetry/orthogonality
measures $\Lambda,\Upsilon$ in \cref{sec5a:eq1} that reveals that
these should be related in any way or form. Hence, the observation
described in \cref{sec5a:remark1} is completely unanticipated. It
cannot be reached by arguments based on empirical or
phenomenological reasoning but relies on the rigorous formulation
developed herein.

\section{Conclusion}
\label{sec6}

We prove that in order to achieve an orthogonal Floquet
decomposition of the oscillator linear-response, which implies
zero AM-PM noise conversion, an oscillator must produce a
rotationally symmetric steady-state solution. This novel link
connecting the configuration of a Floquet decomposition with
symmetry properties of the underlying steady-state is a novel
concept which, to our knowledge, is first discussed here for the
general case. Our results can be interpreted as a condition for
the leaves of the oscillator isochrone foliation to intersect the
limit-cycle set orthogonally. We performed a series of numerical
experiments on higher dimensional oscillators with all simulations
unequivocally supporting the predictions of the proposed model.

\appendix

\section{The PNF-OSC Floquet decomposition}

\label{app1}

The Jacobian matrix $J_{\xi} \in \mathbb{R}^{n\times n}$ is now
calculated as the derivative of the ODE in \cref{sec2b:eq1} at
every point on $\xi$ in \cref{sec2b:eq2}. Due to the perfect
symmetry of these equations it follows easily that this matrix is
constant (independent of coordinates) and has the block-diagonal
form $J_{\xi} = \text{diag}( 0 , -2\mu , W , Z)$ where $W \in
\mathbb{R}^{m\times m}$ is the $m$-dimensional constant diagonal
matrix $W = \text{diag}(-\beta_1,-\beta_2,\cdots , -\beta_m)$
whereas $Z \in \mathbb{R}^{2k\times 2k}$ is the constant
$2k$-dimensional block diagonal matrix $Z =
\text{diag}(Z_1,Z_2,\cdots , Z_k)$ with $Z_i = \bigl[\begin{smallmatrix} -\sigma_i & \nu_i\\
-\nu_i & -\sigma_i \end{smallmatrix}\bigr]$. Note that $W,Z$ are
the sub-Jacobian matrices corresponding to the $w,z$ equations in
\cref{sec2b:eq1}. The PNF-OSC F-MATRIX map $\Psi(t,s)$ (see
discussion \cref{sec1b}) is then calculated as the solution to the
Jacobian equation $\dot{\Psi}(t,s) = J_{\xi} \Psi(t,s)$. Given
that $J_{\xi}$ is constant, and block-diagonal, a closed form
solution is easily derived. The PNF-OSC Monodromy (M-MATRIX)
$\Psi_{2\pi}$ is then found from this F-MATRIX solution as
$\Psi_{2\pi} = \Psi(2\pi + s,s)$. Following this recipe the
PNF-OSC M-MATRIX is derived as

\begin{equation}
\begin{aligned}
\Psi_{2\pi} &= \text{diag}( 1 , -2\mu\pi , W_{2\pi} ,
Z_{2\pi}) \\
W_{2\pi} &=  \text{diag}( \exp(-2\beta_1\pi),\exp(-2\beta_2\pi)
, \cdots , \exp(-2\beta_m\pi)) \\
Z_{2\pi} &= \text{diag}(P\Lambda_1P^{-1},P\Lambda_2P^{-1},\cdots,
P\Lambda_kP^{-1}) \label{app1:eq1}
\end{aligned}
\end{equation}

where $\Lambda_i = \text{diag}( \exp(2\pi[-\sigma_i + j\nu_i]),
\exp(2\pi[-\sigma_i - j\nu_i])$ and $P =  \frac{1}{\sqrt{2}}\bigl[\begin{smallmatrix} j & -j \\
1 & 1 \end{smallmatrix}\bigr]$. Let $\text{EIGV}( \Psi_{2\pi})$ be
the collection of M-MATRIX eigenvectors. It then follows from
inspection of \cref{app1:eq1} that

\begin{equation}
\text{EIGV}( \Psi_{2\pi}) = \bigl[ \hat{\phi} , \hat{r} , \{
\hat{w}_i\}_{i=1}^m , \{ \hat{z}_{2i} \pm j\hat{z}_{2i-1}
\}_{i=1}^{k}\bigr] \label{app1:eq2}
\end{equation}

where $\hat{\phi},\hat{r},\hat{w}_i$ \emph{etc.} represents the
coordinate vectors corresponding to coordinates $\bar{y} = ( \phi
, r , w, z )$ (see text in \cref{sec2b} and \cref{sec2b:foot2}).
The PNF-OSC Floquet decomposition $(u_1(s),u_2(s),\cdots,u_n(s))$,
at phase $\phi = s$, simply correspond to the eigenvectors for the
M-MATRIX (which are independent of $s$ due to symmetry).

\begin{equation}
(u_1(\tau),u_2(\tau),\cdots , u_n(\tau)) = ( \hat{\phi} , \hat{r}
, \{ \hat{w}_i\}_{i=1}^m , \{ \hat{z}_{2i} \pm j\hat{z}_{2i-1}
\}_{i=1}^{k}) \label{app1:eq3}
\end{equation}

where, following the notation introduced in \cref{sec1b} we let
$\tau \in [0,2\pi)$ index the phase of the Floquet vectors
(instead of $\phi$). The M-MATRIX eigenvalue-spectrum, also know
as the characteristic multiplies $\{\lambda_i\}_{i=1}^n$, follow
straight from inspection of \cref{app1:eq1}

\begin{equation}
\begin{gathered}
\textnormal{spec}\{\Psi_{2\pi}\} = \{\lambda_i\}_{i=1}^n =
\bigl[1 , \exp(-\mu 2\pi) , \{\exp( -2\pi\beta_i)\}_{i=1}^m, \\
\{\exp(-2\pi|\sigma_i| \pm j2\pi\nu_i)\}_{i=1}^k\bigr]
\end{gathered}
\label{app1:eq4}
\end{equation}

The PNF-OSC in \cref{sec2b:eq1} is hence able to generate any
possible M-MATRIX eigenvalue-spectrum (real modes + complex modes)
by simply varying the discrete parameters $m,k$ and the value of
circuit parameters $\{\mu , \beta_i , \sigma_i , \nu_i \}$.

\section*{Acronyms}
\begin{description}[leftmargin=8em,style=nextline,topsep=0pt,itemsep=-1ex,partopsep=1ex,parsep=1ex]
\item[LR] linear-response. \item[F-MATRIX] fundamental matrix map.
\item[M-MATRIX] Monodromy matrix (special F-MATRIX). \item[PSS]
periodic steady-state. \item[SYM-OFD] handle for the model
developed herein. \item[N-OSC] $n$-dimensional hyperbolic stable
oscillator. \item[NF-OSC] normal-form oscillator. \item[PNF-OSC]
polar normal-form oscillator. \item[OFD] orthogonal Floquet
decomposition.
\end{description}

\section*{Symbols}
\begin{description}[leftmargin=8em,style=nextline,topsep=0pt,itemsep=-1ex,partopsep=1ex,parsep=1ex]
\item[$\mathbf{o}$] the (P)NF-OSC, $\mathbf{o} =
(\psi_{\tau},\xi)$. \item[$\mathbf{q}$] the N-OSC, $\mathbf{q} =
(\phi_{\tau},\gamma)$. \item[$\psi_{\tau},\phi_{\tau}$]
(P)NF/N-OSC flow maps. \item[$\xi,\gamma$] (P)NF/N-OSC hyperbolic
limit-cycles. \item[$\mathbb{U},\mathbb{W}_s$] (P)NF/N-OSC stable
manifolds (domains). \item[$\mathbf{B}$] the set of all oscillator
tangent bundles. \item[$\mathbf{B}^{\perp}$] OFD tangent bundles
(subset of $\mathbf{B}$). \item[$\mathbf{O}$] the set of all
hyperbolic stable oscillators. \item[$\mathbf{O}^{\perp}$] set of
OFD oscillators, \emph{i.e.} circuits with OFD tangent bundles,
$\mathbf{B}^{\perp}$, (subset of $\mathbf{O}$).
\item[$\mathbf{H}$] set of smooth conjugation maps $h : \mathbb{U}
\to \mathbb{W}_s$. \item[$\mathbf{H}_C$] $h \in
\mathbf{H}_C\subset \mathbf{H}$ if $h\bigl|_{\xi}$ (restriction to
$\xi$) is conformal (angle preserving). \item[$\mathcal{SYM}_n$]
the set of all perfectly symmetric $n$-dimensional limit-cycles
(closed curves). \item[$\sim_h$] the conjugation operator
(equivalence relation) in-terms of the unique conjugation map $h$.
\item[$\mathbf{o}\sim_h \mathbf{q}$] conjugate oscillators
$\mathbf{o}$ and $\mathbf{q}$. \item[$\overset{c}{\sim}_h$] the
$\sim_h$ operator + restriction $h \in \mathbf{H}_C$.
\end{description}

\section*{Acknowledgment}

The authors gratefully acknowledge partial financial support by
German Research Foundation (DFG) (grant no. KR 1016/16-1).



\end{document}